\newtheorem{theorem}{Theorem}
\newtheorem{definition}{Definition}
\newtheorem{problem}{Problem}
\def\BibTeX{{\rm B\kern-.05em{\sc i\kern-.025em b}\kern-.08em
    T\kern-.1667em\lower.7ex\hbox{E}\kern-.125emX}}
\begin{document}

\title{SparStencil: Retargeting Sparse Tensor Cores to Scientific Stencil Computations via Structured Sparsity Transformation$^{\ddagger}$} 

\author{Qi Li}
\authornote{Work done during an internship at Microsoft Research (May 1, 2024)}
\affiliation{%
  \institution{University of Science and Technology of China}
    \country{China}
}

\author{Kun Li}
\authornote{Corresponding author: \texttt{likungw@gmail.com}\\
$^{\ddagger}$\textbf{Accepted to SC'25 (June 3, 2025). This work was previously submitted to ISCA’25 (Nov 22, 2024) and substantially revised based on feedback.}}
\affiliation{%
  \institution{Microsoft Research}
    \country{China}
}
\author{Haozhi Han}
\affiliation{%
  \institution{Peking University}
    \country{China}
}
\author{Liang Yuan}
\affiliation{%
  \institution{Chinee Academy of Sciences}
    \country{China}
}

\author{Junshi Chen}
\affiliation{%
  \institution{University of Science and Technology of China}
    \country{China}
}

\author{Yunquan Zhang}
\affiliation{%
  \institution{Chinese Academy of Sciences}
    \country{China}
}

\author{Yifeng Chen}
\affiliation{%
  \institution{Peking University}
    \country{China}
}

\author{Hong An}
\affiliation{%
  \institution{University of Science and Technology of China}
    \country{China}
}

\author{Ting Cao}
\affiliation{%
  \institution{Tsinghua University}
    \country{China}
}

\author{Mao Yang}
\affiliation{%
  \institution{Microsoft Research}
    \country{China}
}

\begin{abstract}
Sparse Tensor Cores offer exceptional performance gains for AI workloads by exploiting structured 2:4 sparsity. However, their potential remains untapped for core scientific workloads such as stencil computations, which exhibit irregular sparsity patterns.

This paper presents SparStencil, the first system to retarget sparse TCUs for scientific stencil computations through structured sparsity transformation. SparStencil introduces three key techniques:
(1) Adaptive Layout Morphing, which restructures stencil patterns into staircase-aligned sparse matrices via a flatten-and-crush pipeline;
(2) Structured Sparsity Conversion, which formulates transformation as a graph matching problem to ensure compatibility with 2:4 sparsity constraints;
(3) Automatic Kernel Generation, which compiles transformed stencils into optimized sparse MMA kernels via layout search and table-driven memory mapping.
Evaluated on 79 stencil kernels spanning diverse scientific domains, SparStencil achieves up to 7.1x speedup (3.1x on average) over state-of-the-art framework while reducing code complexity and matching or exceeding expert-tuned performance in both compute throughput and memory efficiency.

\end{abstract}

\maketitle



\section{Introduction}
\label{se:intro}
The rapid ascent of artificial intelligence (AI) workloads has fundamentally reshaped the computing landscape. Modern accelerators—especially GPUs—are now engineered not for general-purpose computing, but to saturate the needs of AI-native operations, such as matrix multiplications with structured sparsity. At the heart of this shift are Sparse Tensor Cores, which exploit fine-grained sparsity patterns (e.g., 2:4 sparsity) to skip redundant zero-element computations and significantly boost performance.

Yet, scientific computing has largely remained on the periphery of this acceleration revolution. While sparse TCUs have become a cornerstone of AI performance, their capabilities remain underutilized in scientific applications, where data and compute structures diverge sharply from those in AI. This disconnect reflects a deeper tension: hardware is evolving toward AI-native execution models, but science remains anchored in patterns that predate them.

Nowhere is this divergence more evident than in stencil computations, the backbone of applications such as fluid dynamics~\cite{HUYNH2014209, LUSHER2021108063}, weather prediction~\cite{AoIPDPS, Ben-NunSC}, and earth modeling~\cite{MathiasSC22}. These computations iteratively update grid values based on local neighborhoods and account for over 75\% of the runtime in long-running simulations~\cite{denzler2023casperacceleratingstencilcomputation}. Their ubiquity and cost have earned them recognition as one of the “seven dwarfs” of scientific computing~\cite{asanovic2006landscape, 10.1145/1562764.1562783, 10.1145/3458817.3476154}.

However, stencil computations yield clustered, irregular sparsity patterns that fundamentally differ from the dense or uniformly sparse structures seen in AI. When mapped into matrix formats for GPU acceleration, these patterns often lead to dense masking, padding, and inefficient memory access—all of which violate the alignment constraints of sparse TCUs and degrade utilization. Prior approaches have largely focused on suppressing this sparsity through dense emulation, missing the opportunity to align it with the execution model of sparse TCUs.

This paper asks a fundamental question:
Can we reimagine scientific computing not as a victim of hardware divergence but as a domain whose latent structure can be transformed to align with AI-native accelerators?

We argue yes. And we answer with SparStencil, a stencil computing system that retargets sparse tensor cores originally designed for AI workloads to accelerate scientific stencil computations. 
Instead of coercing stencil computations into dense formats, SparStencil embraces their clustered sparsity, restructures it into 2:4-compatible formats, and compiles it into optimized sparse TCU kernels—automatically and portably.

SparStencil is guided by four key insights that reshape how stencil sparsity can be efficiently exploited on sparse tensor cores.
(1) Sparsity is not overhead—it’s alignment potential.  Prior approaches treated stencil-induced sparsity as overhead to be suppressed in order to fit dense compute units. Instead, we view this sparsity as an alignment opportunity for specialized sparse execution. 
(2) Residual sparsity is a leverage point. Despite masking and padding, stencil workloads retain high residual sparsity (50–80\%). These conditions are poorly handled by dense TCUs but can be harnessed by sparse TCUs once the structure is exposed.
(3) The irregularity does not mean a lack of structure. Though irregular at first glance,  stencil-induced sparsity exhibits consistently clustered spatial patterns. These can be systematically reshaped into alignment-friendly forms with the right abstraction and transformation.
(4) Structure-aware compilation closes the loop.
Exploiting sparse TCUs demands more than format conversion—it requires compilers that understand and reason over structure, automating the path to performance.

To realize these insights, SparStencil consists of three tightly integrated stages:

First, Adaptive Layout Morphing reformulates diverse stencil patterns into unified matrix-matrix representations through a two-step flatten-and-crush pipeline. It systematically eliminates redundant data accesses while exposing staircase-aligned sparsity, a regularized layout pattern that serves as the structural backbone for downstream hardware-oriented transformations.

Second, Structured Sparsity Conversion transforms staircase-aligned sparsity into 2:4-compatible formats required by sparse TCUs. It is achieved by formulating a column-wise conflict graph, where nodes represent matrix columns and edges encode alignment violations. A hierarchical two-level matching algorithm is then applied to resolve conflicts, ensuring valid 2:4 alignment while minimizing zero-padding and maximizing TCU utilization.

Third, Automatic Kernel Generation compiles the transformed computation into high-performance sparse MMA kernels. It integrates layout exploration, table-driven memory mapping, and code synthesis into a fully automated backend, delivering expert-level performance without manual tuning or architecture-specific reengineering.

We evaluate SparStencil across 79 real-world stencil kernels drawn from a wide spectrum of application domains, including partial differential equation solvers, fluid dynamics, lattice Boltzmann methods, phase field models, and geophysical simulations. Compared to state-of-the-art baselines, SparStencil achieves an average 3.1x speedup, peaking at 7.1x, while significantly reducing code complexity. Our automatic compiler approach matches or exceeds manually optimized expert code in both compute throughput and memory efficiency.

To our knowledge, SparStencil is the first system to map scientific stencil computations onto sparse AI accelerators—not by tuning low-level kernels but by structurally rethinking how stencil workloads are represented and executed. This transformation retargets sparse Tensor Cores to scientific domains, and lays the groundwork for a broader shift toward accelerator-centric design for scientific computing.

\section{Background and Challenges}
\label{se:background}
\subsection{Sparse Tensor Core}
\label{se:TCU}

Sparse Tensor Core is a cutting-edge hardware architecture specifically designed to accelerate sparse matrix multiplication, as described in Equation \ref{eq:mma}. 
\begin{equation}
D_{m \times n} = (A_{m \times k} \odot M_{m \times k}) \times B_{k \times n} + C_{m \times n}
\label{eq:mma}
\end{equation}
where $M_{m \times k}$ is a binary mask matrix in which each group of 4 consecutive elements in a row contains exactly 2 elements equal to 1 and 2 elements equal to 0. This constraint, referred to as 2:4 sparsity, is represented as follows:
\begin{equation}
    M_{i, 4j + l - 4} \in \{0, 1\} \quad \text{and} \quad \sum_{l=1}^{4} M_{i, 4j + l - 4} = 2
\label{eq:sparsity}
\end{equation}
While sparse TCUs strictly enforce this 2:4 sparsity pattern, sub-2:4 patterns (0:4 and 1:4) can still be processed. This is achieved by treating one or two elements in the 4-element block as nonzero, even if they are zero. Since multiplying by zero does not affect the result, this approach preserves computational correctness. 
 
At the hardware level, there is an additional constraint; sparse TCUs partition matrices into uniformly sized fragments for computation. Regardless of matrix size variations, these fragments remain fixed (e.g., $16 \times 16 \times 8$, $16 \times 32 \times 8$).~\cite{pool2021sparsity}


Despite the stringent 2:4 sparsity constraints and invariant fragment size limitations imposed by sparse TCUs, it offers substantial computational potential. Across all supported data types (FP16, BF16, and TF32), sparse TCUs achieve up to 32x speedup over traditional Fused Multiply-Add (FFMA) operations and consistently deliver 2x the performance of dense TCUs, presenting a substantial opportunity to accelerate fundamental operators in traditional scientific computing domains, such as stencil computations.~\cite{nvidia_ptx_isa_8.5}


\subsection{Stencil Computation}

Stencil computations play a critical role in numerous scientific and engineering applications, which are extensively involved in various domains from physical simulations to machine learning. Stencil is also included as one of the seven computational dwarfs presented in the Berkeley View and arises as a principal class of floating-point kernels in high-performance computing.

A stencil contains a pre-defined pattern that updates each point in a $d$-dimensional spatial grid iteratively along the time dimension. These patterns are typically categorized into two main types: star and box stencils. In a star stencil, the computation involves a weighted sum of the central point and its neighboring points along individual dimensions. Conversely, a box stencil computes the weighted sum over a neighborhood that forms a square or cube around the central point in 2D or 3D, respectively. 



\begin{figure}[htbp]
\centering
\includegraphics[width=0.49\textwidth]{./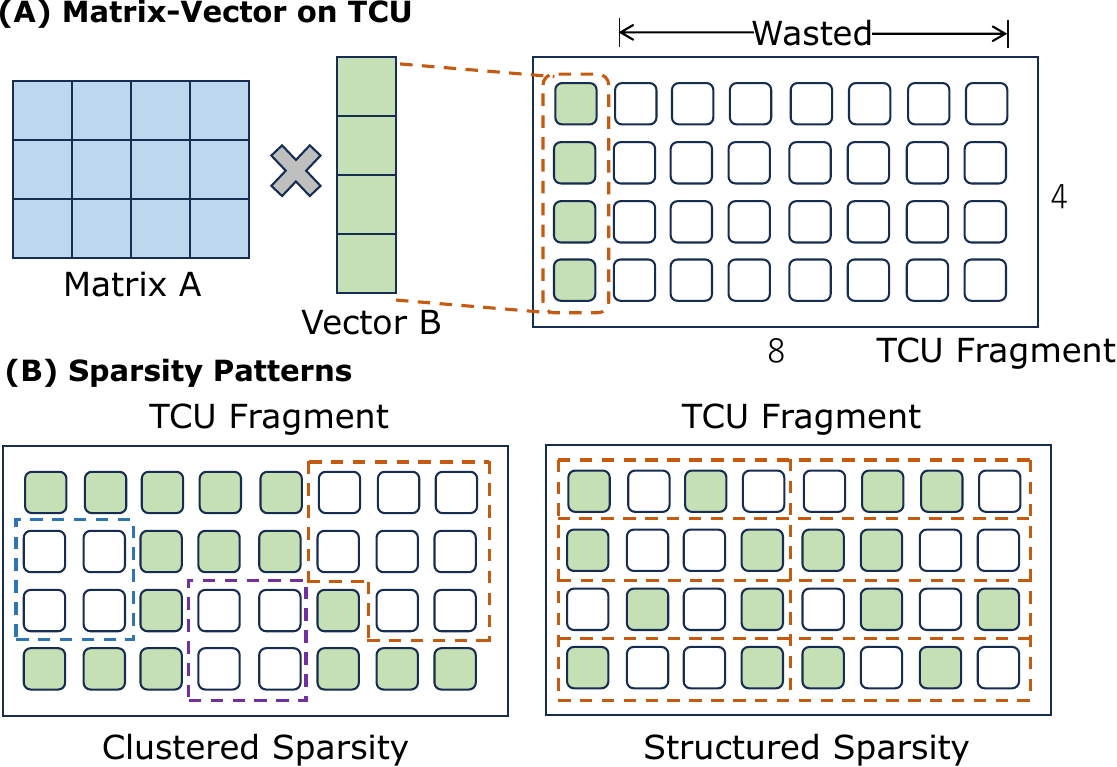} 
\caption{Architectural mapping challenges of stencil computations on TCUs. (a) Naive matrix-vector mapping achieves only $12.5\%$ compute utilization ($\frac{1}{8}$ active columns) on an $8 \times 4$ TCU fragment due to rigid dataflow constraints. (b) The clustered sparsity and structured sparsity in TCUs.}
\label{fig:fragment}
\end{figure}
\subsection{Divergence and Opportunities}
\label{se:opp}

Sparse TCUs have provided dedicated acceleration for the most fundamental matrix multiplication operations in the AI field. However, essential scientific operators such as stencil, face significant challenges in effectively harnessing AI-specific accelerators like TCUs, despite its immense computational potential being a highly valuable yet underutilized resource.



For instance, a direct translation of stencil computations to naive matrix-vector multiplication via a canonical im2row-style approach exhibits severely low accelerator utilization. As shown in Figure \ref{fig:fragment}(a), $87.5\%$ columns are wasted due to the fixed fragment dimensions inherent in TCU architectures.

Fortunately, some efforts have already been made to bridge this divergence by mapping stencil computations onto dense TCUs, such as TCStencil and ConvStencil. However, these adaptations inevitably introduce more than $50\%$ \textit{clustered sparsity}, leading to the underutilization of more than half of the TCU's computational throughput. The \textit{clustered sparsity}, as illustrated in Figure \ref{fig:fragment}(b), refers to sparsity patterns containing contiguous regions with excessive zero concentrations that violate the structured sparsity requirements (2:4 sparsity) of sparse TCUs.


While clustered sparsity has traditionally been seen as a barrier to dense TCUs, it offers a unique opportunity in the context of sparse TCUs. If we can convert the clustered sparsity of the matrix into structured 2:4 sparsity, we can seamlessly integrate sparse TCUs into stencil computations. Instead of focusing on eliminating zero elements, this approach enables us to leverage the hardware to significantly increase computational density. Consequently, rather than viewing sparsity as a limitation, we propose that it should be seen as an opportunity to harness the capabilities of sparse TCUs, unlocking new pathways for stencil computations.

\begin{figure}[htbp]
\centering
\includegraphics[width=0.48\textwidth]{./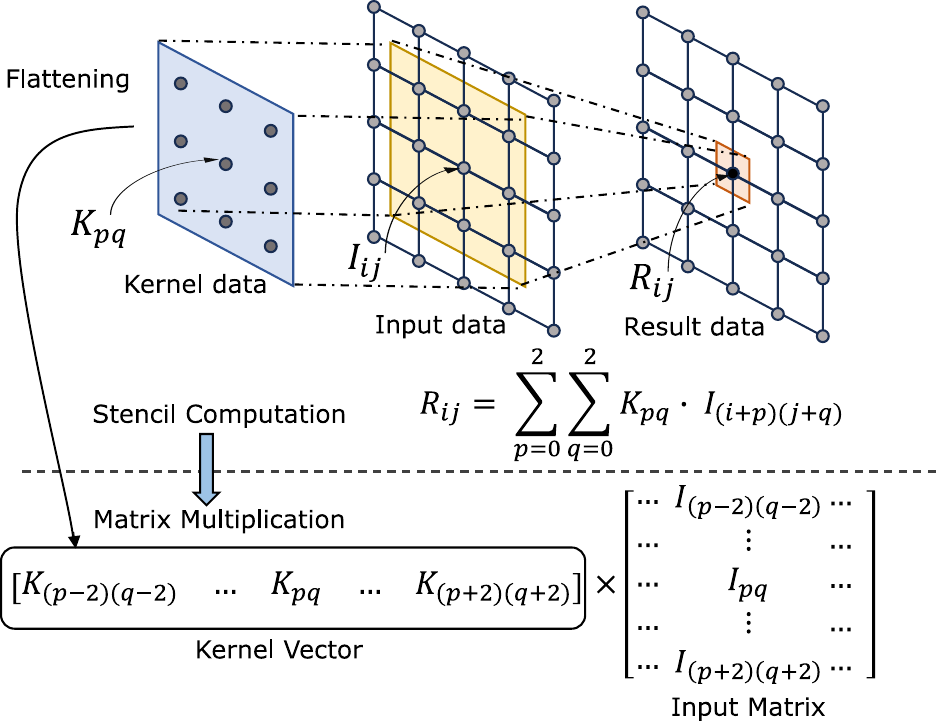}
\caption{Example of the stencil flattening process.}
\label{fig:flattening}
\end{figure}
\subsection{Challenges}

Despite the opportunity, systematically and efficiently mapping stencil computations with various stencil kernels onto sparse TCUs is not a straightforward task. In this subsection, we outline and discuss three key challenges: 

Firstly, \textit{how can we map stencil kernels onto TCUs with low redundancy and sparsity-awareness?} The matrix-based transformation of stencil computations inherently introduces extensive duplication, as sliding-kernel overlap causes each input element to appear repeatedly across the matrix, leading to significant memory overhead. Moreover, existing works (e.g., TCStencil and ConvStencil) lack sparsity awareness, resulting in irregular patterns that cannot be aligned with sparse TCUs. Lacking the ability to restructure such sparsity into a usable form, they often choose to ignore it entirely and fall back to dense computation.


Secondly, \textit{how can we convert sparsity into the fine-grained structured patterns required by sparse TCUs?}   Sparse TCUs demand element-level alignment: every row must contain exactly two non-zeros within each four-element group. Yet stencil sparsity emerges from coarse-grained, spatially entangled structures that defy such regularity. This mismatch in granularity makes element-wise adjustment infeasible—nonzeros cannot simply be shuffled without corrupting stencil semantics. At the same time, coarse layout modifications—such as reordering rows or columns—often introduce cascading disruptions, where changes in one region break alignments elsewhere. As a result, even small transformations risk destabilizing the entire sparsity layout.


Thirdly, \textit{how can we automatically generate highly optimized CUDA kernels with optimal layout mapping?}  
Stencil kernels admit a vast space of layout-to-execution mappings, each inducing distinct sparsity patterns, memory traffic, and alignment constraints. This space is too large and irregular for manual tuning, forcing existing TCU-based methods to rely on hand-tuned schedules tailored to specific kernel sizes. Moreover, the intricate interplay between sparsity patterns, TCU constraints, and kernel performance metrics also presents a significant challenge in defining the search space for identifying the optimal layouts that maximize TCU utilization.




\section{SparStencil}

In this section, we will provide a detailed introduction to SparStencil. It is a novel approach that leverages sparse TCUs for stencil computations for the first time, extending the boundaries of sparse TCUs for stencil computation.



\subsection{Adaptive Layout Morphing}

\label{morph}

Adaptive Layout Morphing is designed with two fundamental objectives: (1) to reformulate stencil computations into matrix representations flexibly, and (2) to implement a sparsity-aware strategy that refines clustered sparsity into a more structured, staircase-like sparsity pattern while consolidating consecutive redundant elements. These objectives are achieved through \textit{Stencil Flattening} and \textit{Duplications Crush}, respectively.


\textit{Stencil Flattening} generalizes the transformation of stencil computations—originally incompatible with TCU—into a unified vector-matrix multiplication format suitable for TCUs. Specifically, the approach unfolds the stencil kernel weights into a single-row vector, the \textit{kernel vector}, while simultaneously reshaping each region of the input data covered by the sliding kernel into corresponding column vectors. Concatenating these column vectors row-wise forms the \textit{input matrix}. Consequently, stencil computations, previously executed as repeated weighted summations, become a vector-matrix multiplication. Figure \ref{fig:flattening} illustrates the stencil flattening process for a $3 \times 3$ kernel applied to a $5 \times 5$ input matrix.

\begin{figure}[htbp]
\centering
\includegraphics[width=0.5\textwidth]{./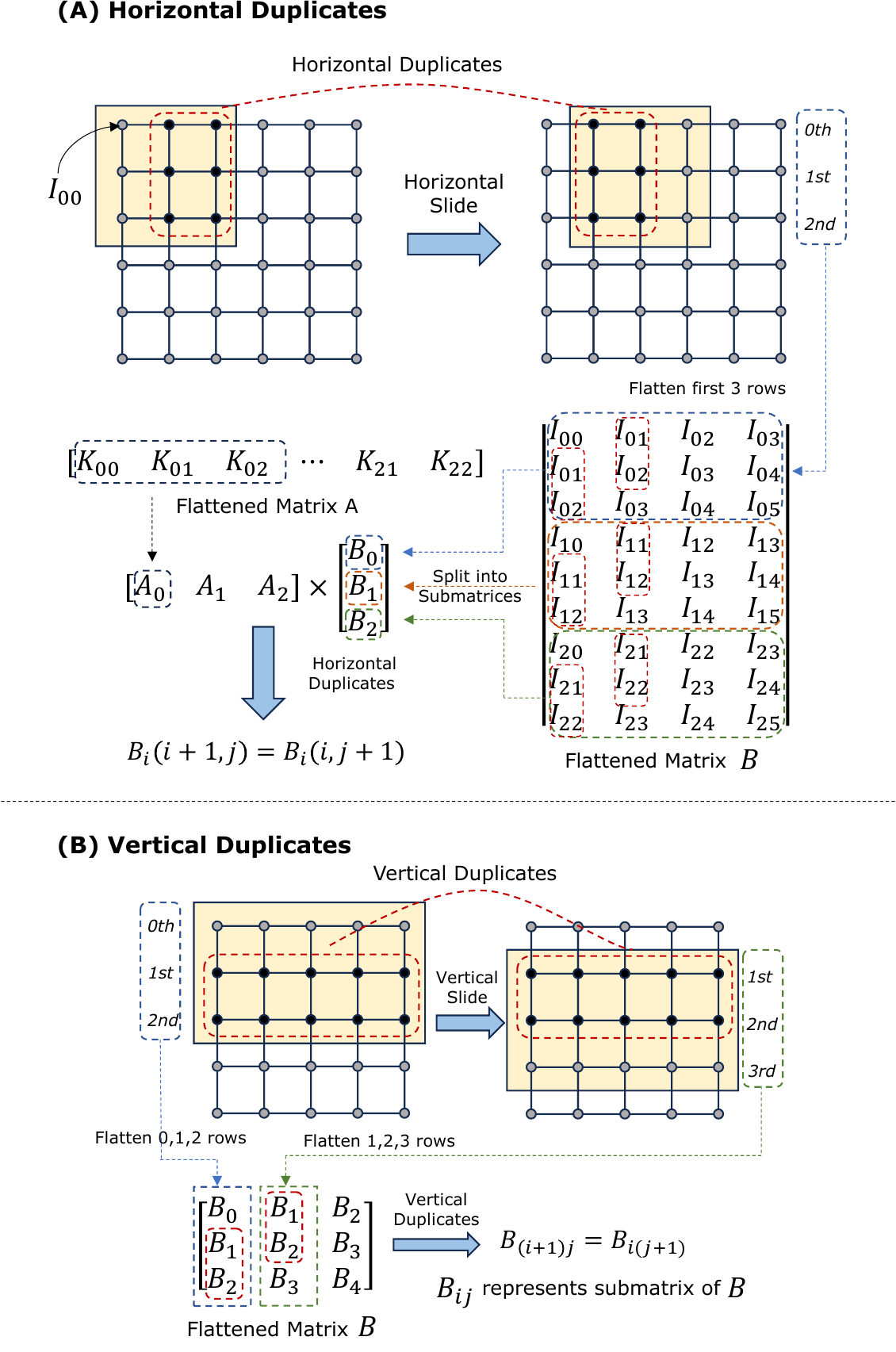}
\caption{The distribution of duplicates after stencil flattening. (a) The horizontal duplicate distribution at the element level within a sub-matrix. (b) The vertical duplicates distribution between sub-matrices at the sub-matrix level.}
\label{fig:horizontal}
\end{figure}

However, directly applying stencil flattening to TCUs often leads to low hardware utilization, as the kernel vector occupies only one row of the TCU fragment, leaving other rows unused. Moreover, this transformation inherently generates significant redundancy: the input matrix contains numerous duplicated elements, aggravating memory-bound constraints inherent in stencil computations.


To address these issues, we introduce \textit{Duplicates Crush}.\footnote{The inspiration for Duplicates Crush comes from the game Candy Crush, where elements are eliminated in a similar manner. Here, we analogously eliminate duplicates.} By crushing redundant elements within the input matrix, it adaptively transforms the \textit{vector-matrix} multiplication into \textit{matrix-matrix} multiplication according to the TCU fragment size while intentionally guiding a staircase-like sparsity pattern, laying the groundwork for the subsequent transformation. The \textit{Duplicates} refers to the element duplicates caused by the horizontal and vertical sliding of the kernel in stencil computations, which can be categorized along the two directions into two types: horizontal duplicates and vertical duplicates.

The horizontal duplicates are illustrated in Figure \ref{fig:horizontal}(a). During horizontal sliding, the stencil kernel repeatedly covers certain elements due to overlapping regions at adjacent positions. These repeated elements are reflected in matrix $B'$ after Stencil Flattening, such as $I_{01}$ and $I_{02}$, $I_{11}$ and $I_{12}$, as well as $I_{21}$ and $I_{22}$. In Figure \ref{fig:horizontal}(a), $B'$ is partitioned into three submatrices, where each $B_i$ consists solely of elements from the $i$-th input row. Due to horizontal kernel sliding, adjacent kernels in Figure \ref{fig:horizontal}(a) share two elements per row. This overlap induces horizontal duplicates in $B$, with adjacent columns in each $B_i$ sharing two identical elements. Generally, these horizontal duplicates can be mathematically expressed as:
\begin{equation} 
B_i(i+1, j) = B_i(i, j+1)
\label{horizontal deduplicates}
\end{equation}
where $B_i \in \mathbb{R}^{k \times (n-k+1)}$, $B \in \mathbb{R}^{k^2 \times (m-k+1)(n-k+1)}$  and $B_i$ is a submatrix of $B$. 



\begin{figure}[t]
\centering
\includegraphics[width=0.5\textwidth]{./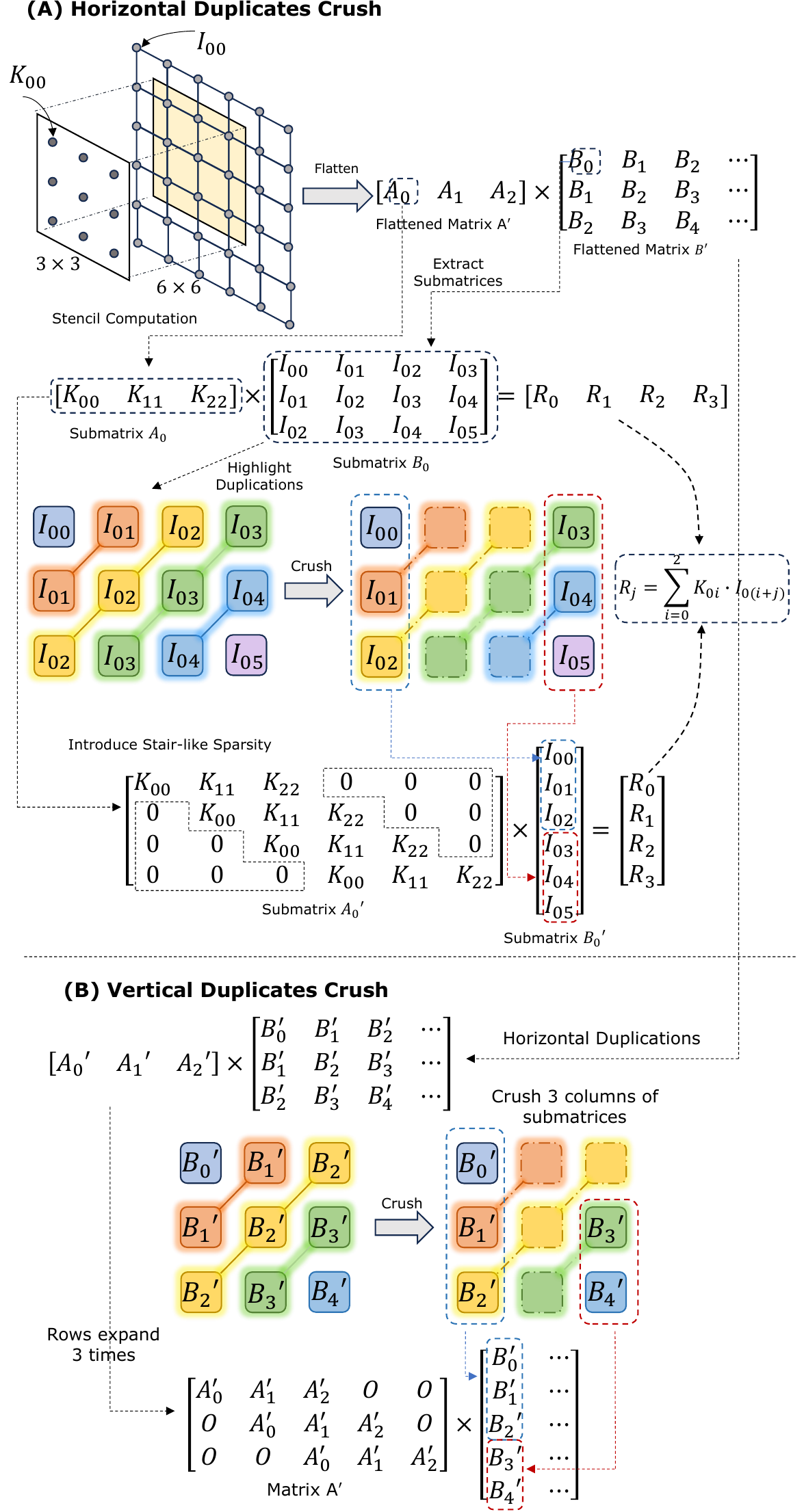} 
\caption{The process of Duplicates Crush. Figure (a) illustrates the horizontal duplicates crush at the element level within a sub-matrix. Figure (b) shows the vertical duplicates crush between sub-matrices at the sub-matrix level.}
\label{fig:crush}
\end{figure}
The vertical duplicates are caused by the vertical sliding of the stencil kernel. As illustrated in Figure \ref{fig:horizontal}(b), elements from the first row and the second row are covered simultaneously by these adjacent kernels. After flattening, these repeated elements appear as submatrices in $B'$, where the elements from the first row become submatrix $B_1$ and the elements from the second row become submatrix $B_2$. Viewing the matrix $B'$ as consisting of submatrix $B_i$, each composed of the elements from the $i$-th row, the vertical redundancy in $B'$ manifests as duplication between these submatrices. Mathematically, this can be expressed as:

\begin{equation} 
B’_{i+1,j} = B'_{i,j+1}
\label{vertical deduplicates}
\end{equation}
where $B'_{i,j}$ denotes the element located at the $(i, j)$ position within the submatrix of $B'$.


After analyzing the distribution of duplicates, we now introduce the \textit{Duplicates Crush}. It eliminates the duplicates in matrix $B'$ after \textit{Stencil Flattening}, while expanding the vector $A$ into a matrix $A'$ and intentionally introducing a staircase-like sparsity pattern within $A$ to prepare for the transformation. Similar to how duplicates are distributed, the crushing procedure is categorized along the stencil kernel’s two sliding directions: horizontal and vertical.


We begin with horizontal duplicate crushing, as characterized in Equation~\ref{horizontal deduplicates}. Since these duplicates are confined within each sub-matrix $B_i$ of $B'$, it suffices to illustrate the process at the sub-matrix level for uniform application across all $B_i$. The procedure, illustrated in Figure~\ref{fig:crush}(a), proceeds as follows:
(1) Consider the flattened matrices $A'$ and $B'$ in sub-matrix form, and select $A_0$ and $B_0$ as representative examples.
(2) In $B_0$, duplicate values appear diagonally across columns. These are eliminated by retaining one representative per group and reorganizing the unique entries into a single-column format.
(3) Concurrently, $A_0$ is vertically expanded into multiple rows with deliberate zero-padding to align with the new structure, forming a staircase-like sparsity pattern. Although not directly compatible with sparse TCUs, the staircase pattern imposes a regular layout that facilitates the subsequent transformation.

We next describe vertical duplicate crushing, which targets duplicates across sub-matrices in $B'$, as defined in Equation~\ref{vertical deduplicates}. The crushing process is again applied at the sub-matrix level, as illustrated in Figure~\ref{fig:crush}(b), and proceeds as follows: 
(1) After the horizontal duplicate crushing, vertical duplicates in $B'$ exhibit diagonal alignment at the sub-matrix level, similar to the horizontal case. (2) Redundant sub-matrices across the three columns are removed, retaining one representative per group. The remaining sub-matrices are then reorganized into a single column structure. (3) Meanwhile, $A$ is similarly expanded into multiple rows at the sub-matrix level.

The above procedures illustrate both horizontal and vertical duplicate crushing. In general, horizontal crushing can be applied to every $r_1$ columns within each $B_i$, while vertical crushing merges every $r_2$ columns of $B'$. The values of $r_1$ and $r_2$ are determined during the subsequent layout exploration phase.

\subsection{Structured Sparsity Conversion}
\label{sparse_conversion}


Although \textit{Adaptive Layout Morphing} guides sparsity into a staircase-like structure, the resulting pattern remains irregular and violates the 2:4 constraint required by sparse TCUs. To bridge this gap, we propose \textit{Structured Sparsity Conversion}, a general transformation that reorganizes intermediate sparsity into a hardware-aligned 2:4 format.

A core component of this process is the \textit{Permutation Invariant Transformation (PIT)}, which preserves computation semantics during layout reordering. PIT simultaneously permutes the columns of $\mathbf{A}$ and the corresponding rows of $\mathbf{B}$ along the $k$-dimension, ensuring the result of $\mathbf{A} \times \mathbf{B}$ remains unchanged. Formally, let $\mathbf{A} = [\mathbf{a}_1, \dots, \mathbf{a}_k] \in \mathbb{R}^{m \times k}$ and $\mathbf{B} = [\mathbf{b}_1^\top; \dots; \mathbf{b}_k^\top] \in \mathbb{R}^{k \times n}$. Then,
\begin{equation}
\mathbf{C} = \sum_{i=1}^k \mathbf{a}_i \mathbf{b}_i^\top = \sum_{i=1}^k \mathbf{a}_{\mathcal{P}(i)} \mathbf{b}_{\mathcal{P}(i)}^\top,
\end{equation}
guaranteeing that the multiplication remains invariant under any shared permutation $\mathcal{P}$—a crucial property for enabling valid sparse pattern adaptation.

While PIT guarantees correctness, not all permutations yield valid 2:4 structured sparsity. To make the search space tractable, we observe that the 2:4 constraint can be equivalently expressed as a composition of 1:2 and 0:2 subpatterns (Section~\ref{se:TCU}). This motivates the core problem of \textit{Structured Sparsity Conversion}: finding a permutation $\mathcal{P}$ that rearranges nonzeros into conflict-free 1:2 or 0:2 groupings while minimizing padding. We formalize this as follows:


Given a matrix \( \mathbf{A'} \in \mathbb{R}^{m \times n} \) derived from layout morphing with parameters \( (r_1, r_2) \), we construct an augmented matrix \( \mathbf{A'_{pad}} \in \mathbb{R}^{m \times (n + p)} \) by appending \( p \) zero columns. The goal is to find a permutation \( \mathcal{P} \) such that the transformed matrix \( \mathbf{A''} = \mathcal{P}(\mathbf{A'}) \) satisfies a 1:2 or 0:2 sparsity pattern while minimizing \( p \). We model this as a graph-theoretic matching problem:
\begin{definition}[Conflict Graph]
\label{def:conflict}
Given \( \mathbf{A'} \in \mathbb{R}^{m \times n} \), its \emph{conflict graph} is \( G = (V, E) \), where \( V = \{v_i\}_{i=1}^n \) represents the columns of \( \mathbf{A'} \), and \( (v_i, v_j) \in E \) if and only if \( \exists r \in [m] \) such that \( \mathbf{A'}[r, i] \neq 0 \) and \( \mathbf{A'}[r, j] \neq 0 \).
\end{definition}
\begin{definition}[Augmented Matching Graph]
Given zero columns \( Z = \{z_1, \dots, z_p\} \), the \emph{augmented graph} is \( G' = (V \cup Z, E) \), where \( E \) is inherited from the original conflict graph and no additional edges are introduced.
\end{definition}
\begin{definition}[Valid Matching]
\label{def:matching}
A matching \( \mathcal{M} \subset \binom{V \cup Z}{2} \) is valid if it satisfies: (i) \emph{coverage}: each \( v \in V \) appears in exactly one edge in \( \mathcal{M} \); and (ii) \emph{conflict-freedom}: \( \mathcal{M} \cap E = \emptyset \).
\end{definition}

\begin{problem}[Minimum Zero-Column Matching]
\label{prob:minzero}
Find the minimal \( p \in \mathbb{N} \) such that a valid matching \( \mathcal{M} \subset \binom{V \cup Z}{2} \) exists in \( G' = (V \cup Z, E) \), with \( |Z| = p \).
\end{problem}

\begin{figure}[htbp]
\includegraphics[width=0.5\textwidth]{./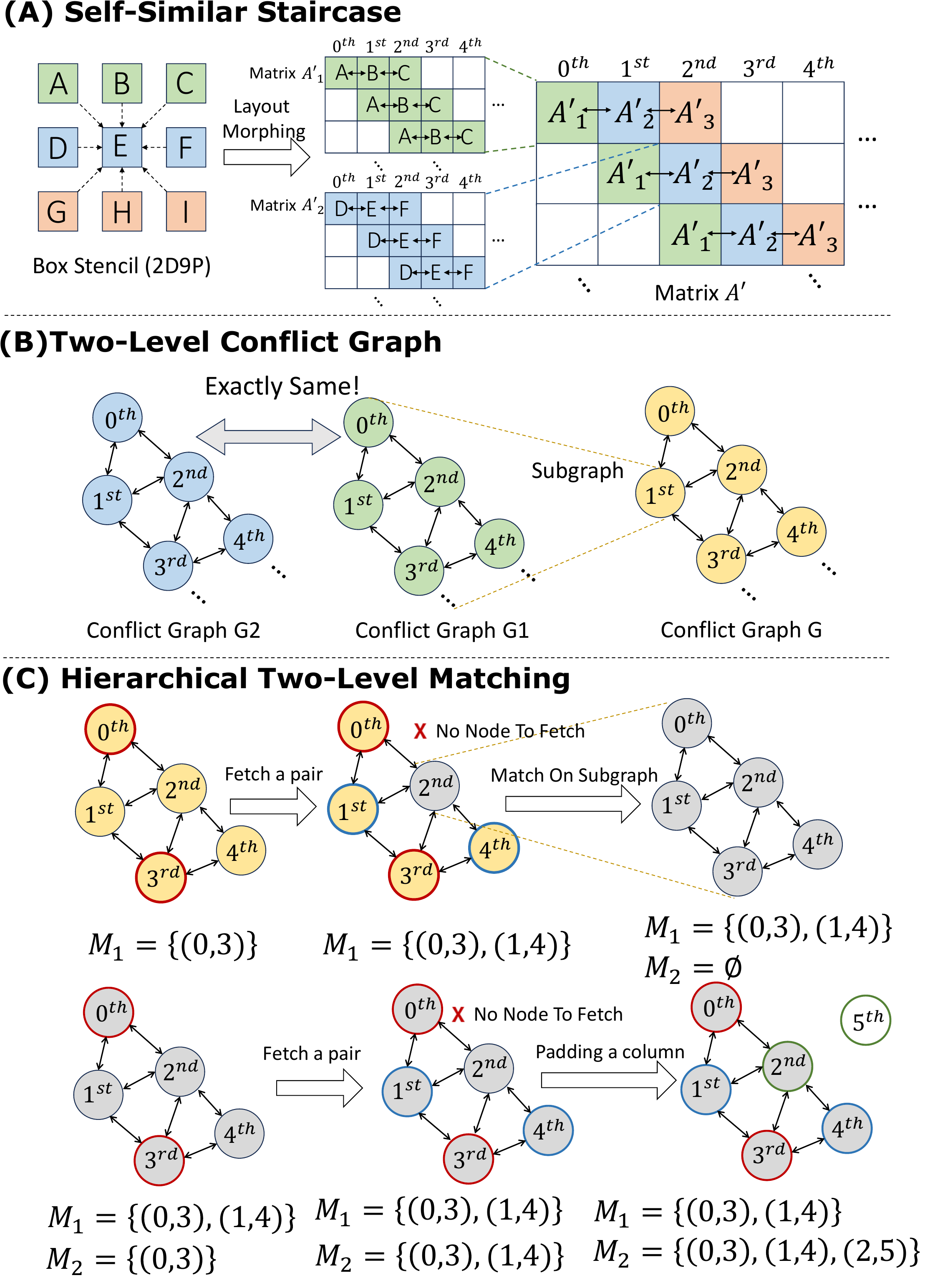} 
\caption{An example of Sparsity Conversion. (a) The self-similar staircase property in the matrix undergoing layout morphing. (b) The global conflict graph and the local conflict graph. (c) The process of hierarchical two-level matching algorithm.}
\label{fig:conflict}
\end{figure}
To solve the problem, we introduce the following definition:
\begin{definition}[$k$-Staircase Matrix]
A matrix \(\mathbf{A'} \in \mathbb{R}^{m \times n}\) has the \emph{$k$-staircase property} if:  
\[
\mathbf{A'}[r,c] \neq 0 \iff r \leq c < r + k -1 \quad (\forall r,c \in [n])
\]  
where \([n] := \{0,1,...,n-1\}\) denotes indices.
\end{definition}

As described in Section~\ref{morph}, the matrix \( \mathbf{A'} \) can be partitioned into blocks \( \{\mathbf{A'}_{i,j}\} \), exhibiting a self-similar $k$-staircase structure:  
(i)~\emph{Global Staircase}: treating each block \( \mathbf{A'}_{i,j} \) as a scalar, the block-level matrix satisfies the $k$-staircase property;  
(ii)~\emph{Local Staircase}: each nonzero block \( \mathbf{A'}_{i,j} \) individually satisfies the same property.  
Figure~\ref{fig:conflict} (a) illustrates this self-similarity.

The self-similar structure of \( \mathbf{A'} \) induces two levels of conflict graphs, as illustrated in Figure~\ref{fig:conflict} (b):  
(i)~\emph{Global Conflict Graph} \( G = (V, E) \): each node corresponds to a block column \( \mathbf{A'}_{:,i} \), with \( (i, j) \in E \) iff \( \exists\, r \) such that \( \mathbf{A'}_{r,i} \neq \mathbf{0} \) and \( \mathbf{A'}_{r,j} \neq \mathbf{0} \);  
(ii)~\emph{Local Conflict Graph} \( G_i = (V_i, E_i) \): each node corresponds to a column in \( \mathbf{B'}_i \), with \( (p, q) \in E_i \) iff \( \exists\, r \) such that \( \mathbf{A'}_{i,j}[r,p] \neq 0 \) and \( \mathbf{A'}_{i,j}[r,q] \neq 0 \).

After constructing two-level conflict graphs, the following theorem can be derived based on the $k$-staircase property:

\begin{theorem}[Non-Conflict Property]
\label{th:strucuture}
Let \(\mathbf{A'}\) be a matrix with a staircase pattern and its conflict graph \(G = (V, E)\). Then any two nodes separated by at least $k$ hops are \emph{conflict-free}:
    \[
    \forall j \in [n - \delta], \quad (c_j, c_{j+\delta}) \notin E \quad \text{for all } \delta \geq k
    \]
\end{theorem}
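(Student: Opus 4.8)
The plan is to reduce this graph-theoretic claim to an elementary disjointness statement about integer intervals. First I would unfold the edge definition: by Definition~\ref{def:conflict}, $(c_i, c_j) \in E$ iff there exists a row $r$ with $\mathbf{A'}[r,i] \neq 0$ and $\mathbf{A'}[r,j] \neq 0$. Using the $k$-staircase property, I would then characterize, for each column $c$, the exact set of rows in which it is nonzero. Solving $r \le c < r + k - 1$ for $r$ gives $r \ge c - k + 2$ and $r \le c$, so the row-support is the contiguous interval
\[
R(c) = \{\, c-k+2,\; c-k+3,\; \dots,\; c \,\},
\]
an interval of length $k-1$. Consequently an edge $(c_i, c_j)$ exists if and only if $R(c_i) \cap R(c_j) \neq \emptyset$, which converts the entire statement into a question about when two such intervals overlap.

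Next I would instantiate this for the two columns of interest. Their row-supports are $R(c_j) = \{\, j-k+2, \dots, j \,\}$ and $R(c_{j+\delta}) = \{\, j+\delta-k+2, \dots, j+\delta \,\}$; that is, the second interval is simply the first shifted upward by $\delta$. Two intervals of this shifted form are disjoint precisely when the smallest element of the upper one exceeds the largest element of the lower one, i.e. $j + \delta - k + 2 > j$, which simplifies to $\delta > k - 2$, or equivalently $\delta \ge k - 1$. Hence for every $\delta \ge k$ (and in fact already for $\delta \ge k-1$) the supports $R(c_j)$ and $R(c_{j+\delta})$ share no common row, so no witnessing row $r$ exists and $(c_j, c_{j+\delta}) \notin E$, which is exactly the asserted non-conflict property.

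I expect the difficulty here to be index bookkeeping rather than any conceptual obstacle. The one genuine case to check is boundary behavior: near the top or bottom of the matrix the support $R(c)$ is truncated to the valid row range $[m]$, but truncation only removes elements, so the intervals can only become \emph{more} disjoint and the non-edge conclusion is preserved. I would also flag the mild slack between the threshold I derive, $\delta \ge k-1$, and the stated threshold $\delta \ge k$: the theorem follows \emph{a fortiori}, and I would either keep the stated (safely weaker) bound or remark that the tighter bound in fact holds, depending on how the ``hops'' count is meant to align with the column-index gap $\delta$. Finally, because the staircase layout is self-similar across the global and local conflict graphs (Figure~\ref{fig:conflict}), the same interval argument transfers verbatim to the block level, so no separate proof is needed there.
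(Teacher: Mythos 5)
Your proof is correct and is essentially the paper's argument made explicit: the paper's one-line proof simply asserts that the staircase offset forces $\mathbf{A'}[r,c_j]\neq 0 \implies \mathbf{A'}[r,c_{j+\delta}]=0$, and your interval computation $R(c)=\{c-k+2,\dots,c\}$ is exactly the bookkeeping that justifies that assertion. Your side observation that disjointness already holds at $\delta\ge k-1$ is a correct consequence of the $k$-staircase definition as literally stated (whose strict inequality gives only $k-1$ nonzeros per column, likely an off-by-one in the paper), and the theorem's weaker threshold $\delta\ge k$ follows \emph{a fortiori}, so nothing is lost.
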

\begin{proof}
\textit{\emph{Conflict free}}: By the staircase offset, columns \(c_j\) and \(c_{j+\delta}\) satisfy:  
\[
\forall r, \mathbf{A'}[r,c_j] \neq 0 \implies \mathbf{A'}[r,c_{j+\delta}] = 0
\]  
Thus, no shared nonzero rows exist. \qedhere  
\end{proof}

With Theorem \ref{th:strucuture}, we propose \textit{Hierarchical Two-Level Matching} to solve the Problem \ref{prob:minzero}, which is shown in Algorithm \ref{alg:hierarchical_matching}. The process of Algorithm \ref{alg:hierarchical_matching} is shown in Figure \ref{fig:conflict}(c). Furthermore, Theorem \ref{th:solution} establishes the correctness and optimality of this algorithm.
\begin{theorem}[Correctness and Optimality]
\label{th:solution}
Let \( \mathcal{M} \) be the matching returned by Algorithm~\ref{alg:hierarchical_matching}, and let \( Z \) be the set of inserted zero-columns. Then:  
(i)~\emph{Validity}: for all \( (v_i, v_j) \in \mathcal{M} \), we have \( |j - i| \ge k \);  
(ii)~\emph{Minimality}: \( |Z| = p \) achieves the minimum value in Problem~\ref{prob:minzero}.
\end{theorem}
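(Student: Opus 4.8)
The plan is to recast Problem~\ref{prob:minzero} as a maximum-matching problem and then verify the two assertions separately. A valid matching must cover all $n$ real columns (Definition~\ref{def:matching}(i)); if $t$ of its edges join two real columns, then the remaining $n-2t$ real columns are each forced to pair with a distinct zero column, so $p \ge n - 2t$. Since every zero column is conflict-free with everything (the augmented graph introduces no new edges), this bound is tight, and minimizing $p$ is equivalent to maximizing $t$ over matchings of the \emph{compatibility graph} $H=(V,\bar E)$ whose edges are exactly the conflict-free column pairs. Writing $\nu(H)$ for its matching number, the optimum is $p^{\ast} = n - 2\nu(H)$, and the theorem reduces to showing the algorithm returns a feasible matching in $H$ of size $\nu(H)$.

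For claim~(i), \textbf{Validity}, I would maintain the invariant that every pair the algorithm commits to has index gap $|j-i|\ge k$. At the global level it matches block-columns that are $k$-separated by the block decomposition, and at the local level it only pairs columns inheriting (or exceeding) this separation inside the self-similar staircase of Figure~\ref{fig:conflict}(a). Given the $k$-staircase property, Theorem~\ref{th:strucuture} upgrades this offset guarantee directly to conflict-freedom, $(c_i,c_j)\notin E$, which together with coverage by $Z$ yields Definition~\ref{def:matching}(i)--(ii). Thus part~(i) is essentially the statement of this invariant.

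For claim~(ii), \textbf{Minimality}, the strategy is a lower bound on $p$ that is matched by the algorithm's output. The lower bound follows from the bounded bandwidth of the staircase: any sufficiently narrow band of consecutive columns is pairwise conflicting and hence a clique in $G$, equivalently an independent set in $H$, which caps $\nu(H)$; a Tutte--Berge / deficiency count over this clique cover pins $\nu(H)$ and therefore $p^{\ast}$. I would then argue the two-level procedure attains it: a maximum matching of the global conflict graph, refined by maximum matchings inside each nonzero block, reassembles into a matching of $H$ of size $\nu(H)$, giving $|Z| = n - 2\nu(H) = p^{\ast}$.

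The hard part will be this final step---showing the hierarchical \emph{decomposition forfeits no optimality}. A priori a compatible pair could straddle the global/local boundary, so I must exclude that some block-oblivious matching covers strictly more columns than the decomposed one. The intended tool is an exchange argument: take any optimal matching and, using the self-similarity of the staircase together with Theorem~\ref{th:strucuture}, rewire each cross-boundary edge into a block-respecting edge of equal cardinality, normalizing the optimum into the form the algorithm produces without increasing $p$. Proving that this rewiring always terminates and never destroys an edge is the crux on which minimality rests.
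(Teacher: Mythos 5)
Your part~(i) is essentially the paper's argument: the paper simply observes that every pair committed by the algorithm has index gap $s_1 \ge k$ (global phase) or $s_2 \ge k$ (local phase), and Theorem~\ref{th:strucuture} converts that offset into conflict-freedom. Your reformulation $p^{\ast} = n - 2\nu(H)$ over the compatibility graph $H$ is correct and is a cleaner statement of what minimality means than the paper gives, so that framing is a genuine (if modest) improvement.

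The problem is part~(ii): you have identified the crux --- that the hierarchical decomposition might forfeit optimality to a block-oblivious matching --- but you have not proved it, and the Tutte--Berge/exchange machinery you sketch is both heavier than necessary and left entirely open at the step that matters ("proving that this rewiring always terminates and never destroys an edge"). A proof proposal that ends by naming its own unproven crux is not a proof. For comparison, the paper closes the within-subgraph case with a one-line pigeonhole that you should be able to reconstruct: under the gap constraint $j \ge i + k$, the smaller endpoint of every matched pair lies in $\{1,\dots,g-k\}$, so at most $g-k$ pairs exist in a subgraph of size $g$, and the algorithm attains exactly $g-k$ pairs (and a perfect or near-perfect matching when $k \le \lfloor g/2 \rfloor$). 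No clique covers or deficiency formulas are needed for that half. What remains --- and what the paper itself dispatches only with the assertion that "all subgraphs are structurally identical, so it suffices to analyze a single subgraph" --- is precisely your cross-boundary concern: when $k > \lfloor m/2 \rfloor$ several adjacent subgraphs are left unmatched by the global phase, and one must rule out that matching their columns against each other (rather than internally) reduces the zero-column count. Your instinct that this requires an explicit exchange argument is sound, and it is the one place where your proposal, if completed, would be more rigorous than the paper; but as written, neither the lower bound nor the normalization step is established, so the minimality claim is unproven in your submission.
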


\begin{proof}
(i)~\emph{Validity.} Let \( (u, v) \in \mathcal{M} \). If \( (G_i, G_j) \in \mathcal{M}_1 \), then \( j - i = s_1 \ge k \); if \( (v_i, v_j) \in \mathcal{M}_2 \), then \( j - i = s_2 \ge k \). Hence, all matched pairs satisfy \( |j - i| \ge k \). (ii)~\emph{Minimality.} Since all subgraphs are structurally identical, it suffices to analyze a single subgraph \( G_i \) of size \( g \). If \( k \le \lfloor g/2 \rfloor \), then: when \( g \) is even, a perfect matching exists; when odd, exactly one node remains unmatched—this is trivially optimal.  
If \( k > \lfloor g/2 \rfloor \), the algorithm pairs \( v_1,\dots,v_{g-k} \) with \( v_{k+1},\dots,v_g \), matching \( g - k \) nodes. Assume a better matching \( \mathcal{M}' \) exists. Validity requires \( j \ge i + k \Rightarrow i \le g - k \) for any pair \( (i,j) \in \mathcal{M}' \), so no pair with \( i > g - k \) can exist, contradicting the assumption. Thus, the matching is optimal. \qedhere
\end{proof}

Theorem~\ref{th:solution} ensures that Algorithm~\ref{alg:hierarchical_matching} achieves both correctness and optimality with linear complexity \( O(|V|) \), since each vertex is traversed at most once. While this is guaranteed under the $k$-staircase assumption, our approach remains broadly applicable even when the input deviates from this structure. In such cases, the same conflict graph formulation allows us to fall back to the classical Blossom algorithm~\cite{Edmonds_1965} to compute maximum matchings over arbitrary sparsity patterns. Despite its worst-case complexity of \( O(|E||V|^2) \),  the Blossom algorithm remains practical in our setting, as real-world stencil kernels typically yield small and sparse conflict graphs.

\begin{algorithm}[t]
\caption{Hierarchical Two-Level Matching}
\label{alg:hierarchical_matching}
\KwIn{Global graph $G$ with $n$ nodes, subgraph size $g$, stencil size $k$}
\KwOut{Perfect matching $\mathcal{M}$}
Initialize $m \leftarrow n/g$, $s_1 \leftarrow \max(\lfloor m/2 \rfloor, k)$, $\mathcal{M}_1 \leftarrow \emptyset$\;
\For{$i = 0$ \KwTo $m - 1$}{
  \If{$G_i$ unmatched and $i + s_1 < m$}{
    $\mathcal{M}_1 \leftarrow \mathcal{M}_1 \cup \{(G_i, G_{i+s_1})\}$\;
  }
}
Initialize $\mathcal{M}_2 \leftarrow \emptyset$, $s_2 \leftarrow \max(\lfloor g/2 \rfloor, k)$\;
\For{each unmatched $G_x$}{
  \For{$u = 0$ \KwTo $|V(G_x)| - 1$}{
    $v \leftarrow u + s_2$\;
    \If{$v_x^u$ unmatched}{
      \uIf{$v < |V(G_x)|$}{
        $\mathcal{M}_2 \leftarrow \mathcal{M}_2 \cup \{(v_x^u, v_x^v)\}$\;
      }
      \Else{
        Add zero node $\zeta_v$, $\mathcal{M}_2 \leftarrow \mathcal{M}_2 \cup \{(v_x^u, \zeta_v)\}$\;
      }
    }
  }
}
Initialize $\mathcal{M} \leftarrow \emptyset$\;
\For{$(G_p, G_q) \in \mathcal{M}_1$, $t = 0$ \KwTo $g - 1$}{
  $\mathcal{M} \leftarrow \mathcal{M} \cup \{(v_p^t, v_q^t)\}$\;
}
$\mathcal{M} \leftarrow \mathcal{M} \cup \mathcal{M}_2$\;
\end{algorithm}

\subsection{Automatic Kernel Generation}
\label{se:auto}



Despite methods for leveraging sparse TCUs, manual tuning for sparse TCUs is both labor-intensive and inherently suboptimal due to the high variability of stencil kernels and TCU fragments. To address this issue, we propose \textit{Automatic Kernel Generation}, a compiler framework that enables end-to-end scheduling through two phases: (1) \textit{layout exploration} dynamically determines the optimal data layout by automatically searching over a tunable space, minimizing memory transfer overhead while maximizing computational density.  (2) \textit{Code Generation} emits expert-level CUDA kernels via metadata-driven sparse MMA, lookup-based memory mapping, and double-buffered pipelines, consistently outperforming hand-tuned implementations.


\textbf{\textit{Layout Exploration}}. The configuration parameters $(r_1, r_2)$ in Section \ref{morph} modulate data layout patterns and induce clustered sparsity in TCU, where $r_1$ and $r_2$ denote the number of consecutive columns to be merged in $B'$ and $B$. The configuration creates distinct memory-computation trade-offs that necessitate automated exploration to identify optimal configurations. To systematically evaluate these configurations, we adopt an analytical model tailored  for stencil computations on TCU \cite{10.1145/3627535.3638476}, shown in Equation \ref{eq:T}, \ref{eq:com} and \ref{eq:mem}.
\begin{equation}\label{eq:T}
    T=max(T_{compute}\;, T_{memory}) 
\end{equation}
\begin{equation}\label{eq:com}
T_{compute}=\frac{ N_{MMA} \times CPI_{tcu}}{fN_{tcu}}
\end{equation}
\begin{equation}\label{eq:mem}
T_{memory} = max(\frac{data_{R}}{bw_G} + \frac{data_{W}}{bw_G}, \frac{data_{transW}}{bw_{S}} + \frac{data_{transR}}{bw_{S}})
\end{equation}

\textit{Compute Evaluation}. As defined in Equation~\ref{eq:com}, $T_{compute}$ depends on $N_{MMA}$, with $CPI_{tcu}$, $f$, and $N_{tcu}$ fixed by hardware. Given an $m \times n$ input, a $k \times k$ stencil kernel, and a TCU fragment of size $M \times K \times N$, applying $(r_1, r_2)$ layout morphing yields $\mathbf{A'} \in \mathbb{R}^{m' \times k'}$, $\mathbf{B'} \in \mathbb{R}^{k' \times n'}$ with $m' = r_1 r_2$, $k' = (k + r_1 -1)(k + r_2 -1)$, and $n' = \frac{(m-k+1)(n-k+1)}{r_1 r_2}$. The corresponding MMA count is:
\begin{equation}
 \label{eq:compute2}
    N_{MMA} = \left\lceil \frac{m'}{M} \right\rceil \left\lceil \frac{k'}{K} \right\rceil \left\lceil \frac{n'}{N} \right\rceil
\end{equation}
Substituting $N_{MMA}$ into Equation~\ref{eq:com} yields $T_{compute}$.
\begin{table}[b]
  \centering
  \caption{Summary of symbols used in performance modeling.}
  \label{tab:notation}
  \small
  \begin{tabular}{cc}
    \toprule
    \textbf{Symbol} & \textbf{Description} \\
    \midrule
    $T$, $T_{\text{compute}}$, $T_{\text{memory}}$ & Total, compute, and memory time \\
    $data_{\text{transR}}$ / $data_{\text{transW}}$ & Shared memory read/write volume \\
    $N_{\text{MMA}}$, $N_{\text{tcu}}$ & MMA operations number, TCUs number\\
    $data_{R}$ / $data_{W}$ &Global memory read/write volume\\
    $CPI_{\text{tcu}}$, $f$ & Cycles per TCU op, GPU core frequency \\
    $bw_G$, $bw_S$ & Bandwidth of global/shared memory \\
    \bottomrule
  \end{tabular}
\end{table}

\textit{Memory Evaluation}. SparStencil stores the original input $\mathbf{I}$ and the pre-converted kernel matrix $\mathbf{A'}$ in global memory. As stencil kernels are typically small, the footprint of $\mathbf{A'}$ is negligible compared to $\mathbf{I}$, which remains layout-invariant and thus independent of $(r_1, r_2)$. Consequently, $data_R$ and $data_W$ in Equation~\ref{eq:mem} are constant, while $data_{transR}$ and $data_{transW}$ depend on $\mathbf{A'}$ and $\mathbf{B'}$ after layout morphing:
\begin{equation}
    data_{transR} = data_{transW} = k'\left(\frac{m}{2} + n'\right)
\end{equation}
We compute $T_{\mathrm{memory}}$ via Equation~\ref{eq:mem}, and total latency $T$ using Equation~\ref{eq:T}. The optimal configuration is selected by exhaustive search:
\begin{equation}
(r_1^*, r_2^*) = \arg\min_{r_1, r_2 \in \mathcal{S}} T(r_1, r_2)
\end{equation}

For each layout candidate $(r_1, r_2)$, SparStencil conducts static graph analysis to construct a conflict graph from the symbolic sparsity of $\mathbf{A'}$. A provably optimal permutation is then derived via Algorithm~\ref{alg:hierarchical_matching} to satisfy the 2:4 sparsity constraint under structural constraints. The resulting permutation is applied via PIT, yielding a layout conforming to sparse TCU execution.




\textbf{\textit{Code Generation}}. With optimal parameters ($r_1$, $r_2$) and the PIT transformation applied, the resulting irregular stencil patterns and hardware-specific layouts create highly scattered memory access patterns that are impractical to analyze manually. To address this, SparStencil automates CUDA kernel generation by integrating metadata encoding, lookup table acceleration, and a highly optimized CUDA template.

\begin{figure*}[htbp]
\includegraphics[width=0.95\textwidth]{./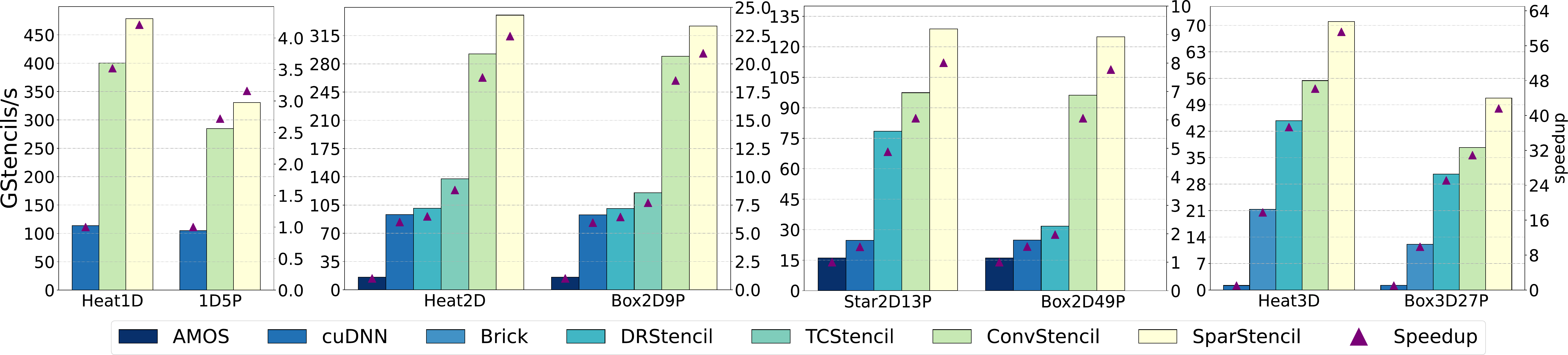} 
\caption{Performance comparison of SparStencil with state-of-the-arts.}
\label{sota}
\end{figure*}

Metadata. Computing sparse matrix multiplication $\mathbf{A} \times \mathbf{B}$ on sparse TCUs necessitates the generation of metadata to indicate the positions of non-zero elements in matrix $\mathbf{A}$. The metadata generation process is straightforward: it involves traversing matrix $\mathbf{A}$,  recording the indices of nonzero elements, and encoding them into a compact format. 

Lookup Table. During layout transformation, computing address pointer offsets for transferring data from global to shared memory introduces significant computational overhead due to frequent integer division and modulus operations, which are inefficient on GPUs. Moreover, these computations often exhibit redundancy across different processing blocks. To mitigate this, we precompute pointer offsets on the host and store them in lookup tables, which are then passed to the CUDA kernel. This approach eliminates redundant computations, significantly improving memory access efficiency.

Based on prior analysis, SparStencil instantiates optimized CUDA kernels from a predefined template tailored for memory efficiency, compute-transfer overlap, and sparse TCU utilization. The kernel follows a three-stage pipeline: (1) asynchronous loading via lookup tables to prefetch global memory into shared memory; (2) sparse MMA execution using metadata, interleaved with the next load to maximize throughput; and (3) writing results back to global memory, completing a double-buffered loop.

\section{Evaluation}
\label{se:evaluation}
\subsection{Setup}

\textbf{\textit{Implementation.}} We implement SparStencil with CUDA C++ and PTX. SparStencil is compiled with NVCC 12.4. 

\textbf{\textit{Machine.}} Our platform is composed of an AMD EPYC 7543 processor and an Nvidia A100 GPU with sparse Tensor Core. The A100 GPU is connected to the motherboard via PCIe Gen4. The A100 GPU is equipped with 108 SMs, each containing 4 sparse Tensor Cores~\cite{nvidia_a100_datasheet}.


\textbf{\textit{State-of-the-arts.}} We compare SparStencil with a wide range of state-of-the-arts, including cuDNN, AMOS, Brick, DRStencil, TCStencil, and ConvStencil in FP16. Moreover, considering that ConvStencil employs 3x temporal fusion for small kernels, SparStencil adopts the same approach in our comparative evaluation to ensure a fair performance comparison. 

%

\textbf{\textit{Benchmarks.}} We use various stencil kernels with different shapes as benchmarks. The specifics are detailed in Table \ref{tab:benchmark}~\cite{10.1145/2464996.2467268, yuansc17}.

\textbf{\textit{Metrics}}. Most works on stencil evaluate performance using GStencil/s~\cite{matsumura2020an5d, zhang2023perks, zhang2023revisiting, chen2019, yuanicpp19, yuansc17} (Gigastencils per second, denoting the number of stencil points updated per second) as a metric. We also adopt this metric, as defined in Equation \ref{eq:Gstencil}.
\begin{equation}
    \label{eq:Gstencil}
    \text{GStencil/s} = \frac{T \times \prod_{i=1}^{n} N_i}{t \times 10^9}
\end{equation}
where $T$ denotes the number of iterations, $n$ denotes the dimensionality of the stencil, $N_i$ denotes the size of the $i$-th dimension, and $t$ denotes the total execution time in seconds.

\begin{figure*}[htbp]
\includegraphics[width=1\textwidth]{./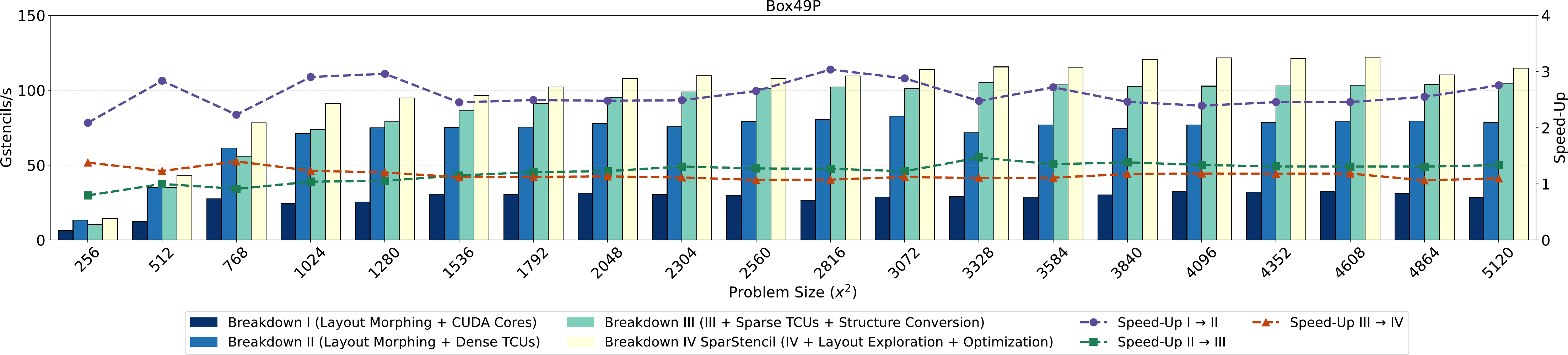} 
\caption{Performance breakdown of SparStencil}
\label{breakdown}
\end{figure*}
\begin{table}[htbp]
  \centering
  \caption{Configuration for Stencil Benchmarks}
  \label{tab:benchmark}
  \small
  \begin{tabular}{cccc}
    \toprule
    \textbf{Kernel} & \textbf{Points} & \textbf{Problem Size} & \textbf{Block} \\
    \midrule
        Heat-1D & 3 & $10240000 \times 10000$ & 1024 \\
        1D5P & 5 & $10240000 \times 10000$ & 1024 \\
        Heat-2D & 5 & $10240 \times 10240 \times 10240$ & $32 \times 64$ \\
        Box-2D9P & 9 & $10240 \times 10240 \times 10240$ & $32 \times 64$ \\
        Star-2D13P & 13 & $10240 \times 10240 \times 10240$ & $32 \times 64$ \\
        Box-2D49P & 49 & $10240 \times 10240 \times 10240$ & $32 \times 64$ \\
        Heat-3D & 7 & $1024 \times 1024 \times 1024 \times 1024$ & $8 \times 64$ \\
        Box-3D27P & 27 & $1024 \times 1024 \times 1024 \times 1024$ & $8 \times 64$ \\
    \bottomrule
  \end{tabular}
\end{table}

\subsection{Preprocessing Overhead Analysis ~\label{Sec:Preprocessing}} 

Figure \ref{overhead} illustrates the overhead associated with metadata, lookup table, and transformation across different stencils. 

Overall overhead is minimal and quickly amortized. For 1D stencils, a brief spike is observed (1D-5P: LUT peaks near 30\%) but rapidly diminishes. Higher-dimensional stencils, such as Box-3D27P, show consistently low and stable overhead due to favorable reuse patterns.

2D stencils (Box-2D9P, Box-2D49P) exhibit a slightly higher sustained overhead, occasionally exceeding 10\%. Compared to 1D stencils with simpler memory access and 3D stencils with higher data reuse in tensor cores, 2D stencils have a less optimal balance between sparsity transformation and computation-to-memory ratio, leading to relatively higher preprocessing costs.

\begin{figure}[htbp]
\includegraphics[width=0.5\textwidth]{./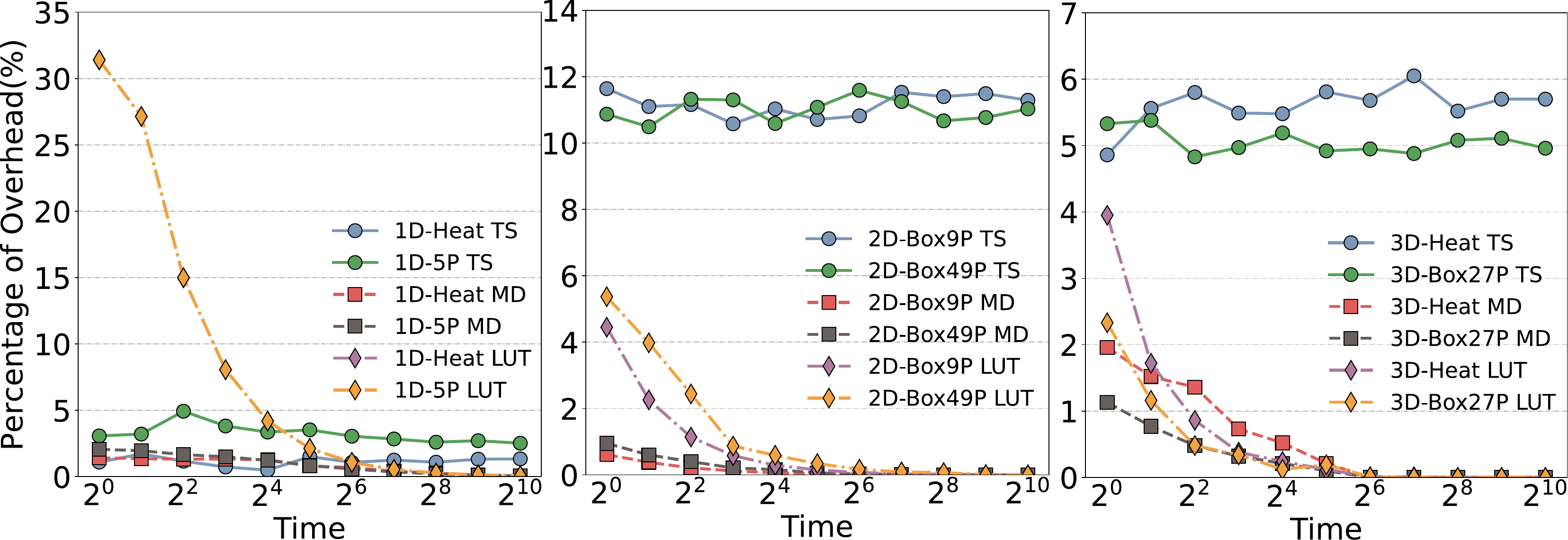} 
\caption{The percentages of overhead across various stencils. TS means Transformation, MD means Metadata and LUT means Lookup Table. }
\label{overhead}
\end{figure}


\subsection{State-of-the-art Comparison}
\label{sec:sota}

Figure \ref{sota} compares SparStencil against state-of-the-art baselines. SparStencil consistently outperforms all reference methods, achieving 2.89x–60.35x speedup over cuDNN, which lacks Tensor Core support for stencil patterns and underperforms on one-channel convolutions. In contrast, AMOS falls short due to inefficient stencil-to-TCU mapping.

Against stencil-specialized baselines like DRStencil and ConvStencil, SparStencil retains a clear advantage. On Star-2D13P and Box-2D49P, it achieves up to 38.89\% higher performance, attributed to reduced memory traffic via layout search and superior compute throughput from sparse TCUs. This synergy of memory and compute underpins SparStencil's overall performance lead.

\subsection{Performance Breakdown}

Figure~\ref{breakdown} illustrates the incremental performance gains from each optimization stage using Box-2D49P as a representative case. Applying Layout Morphing on dense TCUs yields a 1.58x speedup over the CUDA baseline, attributed to the superior matrix multiplication capabilities of dense TCUs. Subsequently, employing PIT to harness sparse TCUs results in an additional 1.22x speedup, as sparse TCUs deliver double the computational power of dense TCUs. Notably, for problem sizes 256 and 768, PIT resulted in 0.79x and 0.90x performance, respectively, as the memory overhead of PIT exceeds the compute efficiency gains from sparse TCUs at limited scales. Finally, the application of further performance-boosting optimizations achieves an additional 1.24x improvement. With this, we have completed all optimizations in SparStencil.
\begin{figure}[htbp]
\includegraphics[width=0.5\textwidth]{./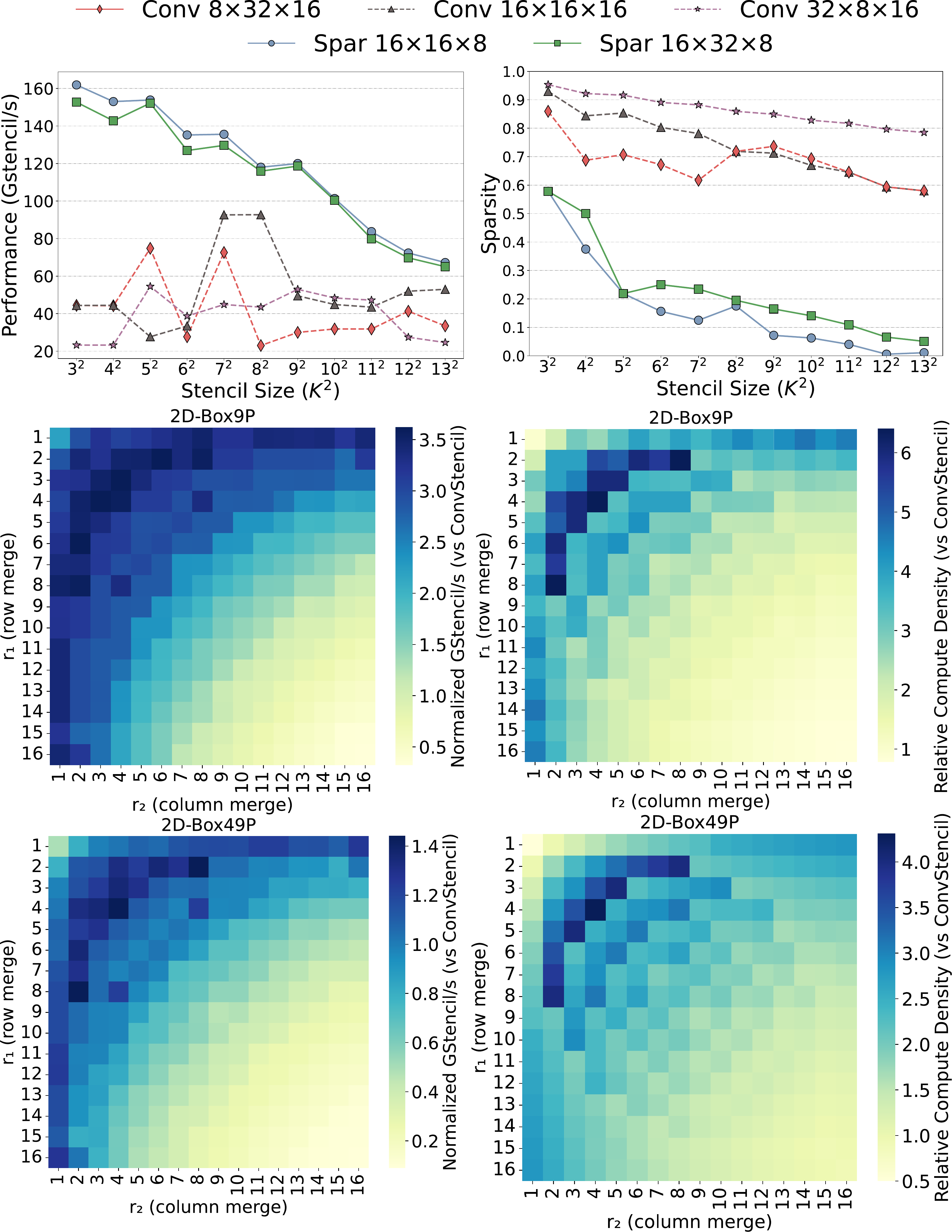} 
\caption{Performance, sparsity, and compute density across stencil sizes and layout configurations.
Top: Throughput (left) and sparsity ratio (right) across stencil sizes on different TCU fragments. Bottom: Performance and compute density over $(r_1, r_2)$ for two representative 2D stencil kernels.}
\label{adpative}
\end{figure}
\subsection{Adaptivity and Sparsity Across Varying Configurations }

\begin{figure*}[htbp]
\includegraphics[width=1\textwidth]{./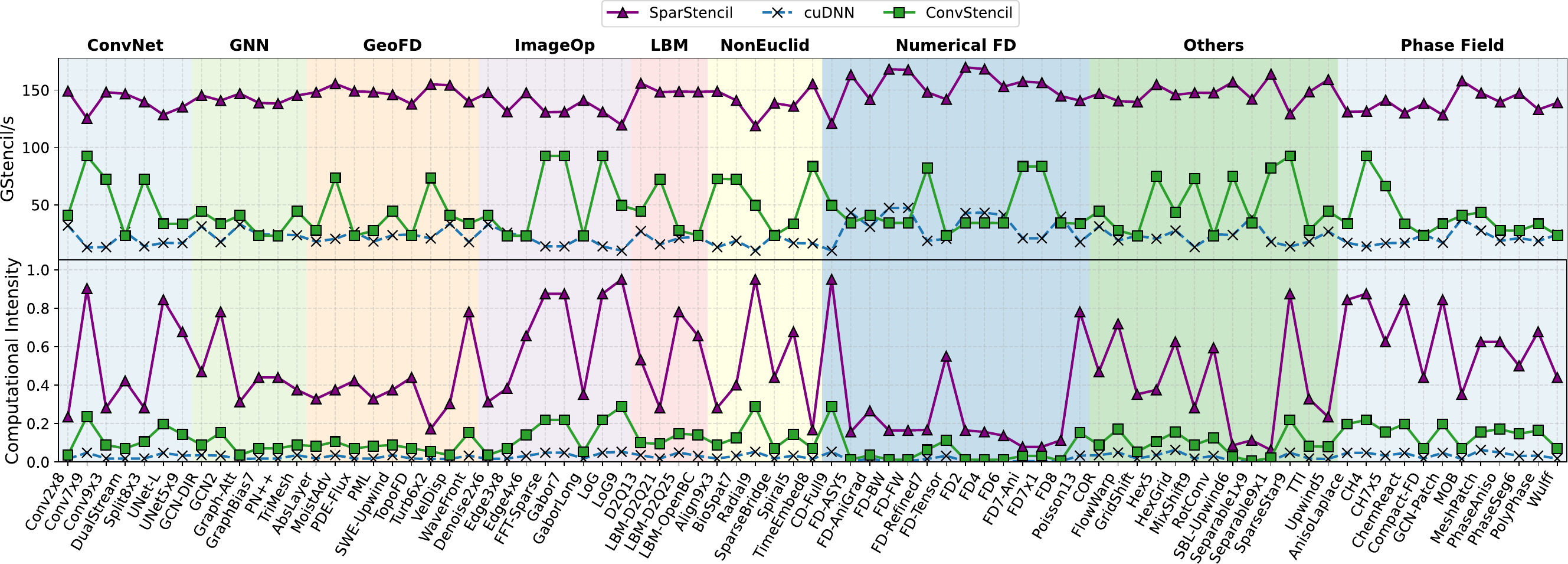} 
\caption{Performance and compute density breakdown across 79 real-world stencil kernels spanning 9 application domains. Top: end-to-end throughput (GStencil/s) of SparStencil, cuDNN, and ConvStencil; Bottom: corresponding compute intensity.}
\label{patterns}
\end{figure*}

To better evaluate adaptivity, we omit the 3x temporal fusion used in Section~\ref{sec:sota}, as it masks performance variation for small kernels. Here, we assess adaptivity and runtime sparsity across two dimensions: varying stencil sizes and diverse stencil patterns.

For varying stencil sizes, as illustrated in Figure \ref{adpative}, SparStencil consistently outperforms dense baselines across all stencil sizes and TCU fragments, achieving up to $4.1$× speedup and maintaining sparsity below $60\%$. Heatmaps show that optimal compute density emerges in specific $(r_1, r_2)$ regions, which are effectively discovered by our automated search. demonstrating the system’s ability to adaptively align diverse stencil patterns with TCU-efficient layouts.

For diverse stencil patterns, as illustrated in Figure \ref{patterns}, across 79 stencil kernels spanning 9 application domains, SparStencil consistently achieves higher throughput, reaching up to $156.7$ GStencil/s and outperforming cuDNN and ConvStencil by $6.3$x and $3.1$x on average, respectively. SparStencil maintains a stable performance, demonstrating robustness to stencil structural diversity. In terms of compute density, SparStencil achieves 17.92x and 4.46x over cuDNN and ConvStencil, respectively. Thanks to its adaptive layout search, SparStencil effectively reduces memory footprint and maintains stable performance even under extreme sparsity.


\subsection{GPU Resource Utilization Comparison}

Figure~\ref{hardware} shows hardware utilization across six metrics. SparStencil achieves higher SM utilization (74.5\%) and occupancy (96.9\%) than ConvStencil (18.3\%, 61.3\%) and cuDNN (59.4\%, 88.5\%), enabled by sparsity-aware layout. L1/TEX cache and memory throughput reach 64.5\% and 64.1\%, indicating effective on-chip reuse. In contrast, DRAM throughput (17.5\%) and L2 cache activity (52.6\%) are lower than cuDNN (43.5\%, 61.6\%). This reduction reflects SparStencil’s layout-aware access patterns, which promote reuse in L1/shared memory, reducing dependence on L2 and minimizing global memory pressure.



\begin{figure}[htbp]
\includegraphics[width=0.48\textwidth]{./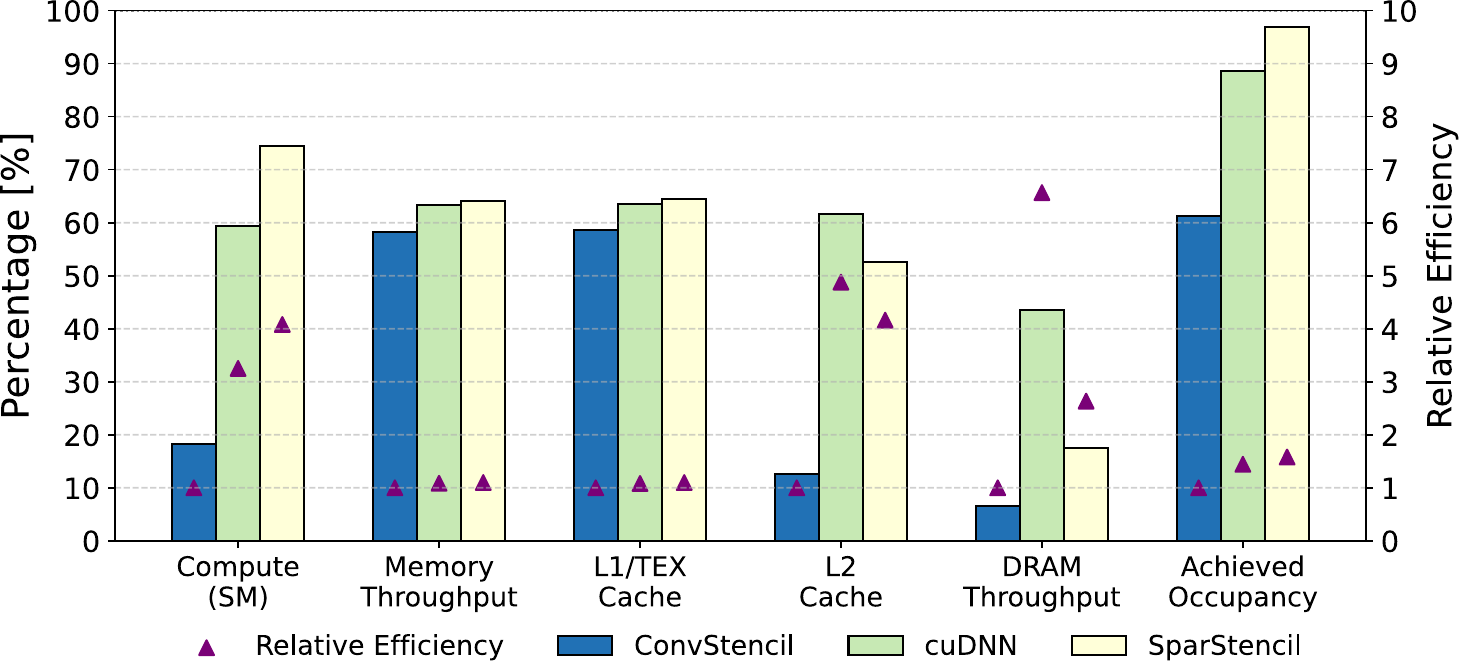} 
\caption{Hardware utilization comparison.}
\label{hardware}
\end{figure}

\begin{table}[b]
\footnotesize
\centering
\caption{Performance Comparison of SparStencil and SOTA methods under FP64 precision across different stencils (GFlops/s)}
\vspace{-8pt}
\label{tab:stencil_performance}
\begin{tabular}{lcccc}
\toprule
Method       & Heat-2D & Box-2D9P & Star-2D13P & Box-2D49P \\\midrule
AMOS         & 10.16   & 10.23    & 10.51      & 10.59     \\
cuDNN        & 64.33   & 64.57    & 17.05      & 17.15     \\
DRStencil    & 55.46   & 57.63    & 50.16      & 20.28     \\
ConvStencil  & 65.83   & 62.76    & 64.37      & 63.93     \\
SparStencil  & \textbf{72.49}   & \textbf{73.25}    & \textbf{71.34}      & \textbf{67.28}     \\\bottomrule
\end{tabular}
\end{table}

\subsection{Evaluating SparStencil Beyond FP16: Performance at FP64}

Given that current sparse TCUs lack direct support for FP64 computation, our primary evaluation mainly focuses on FP16 precision. To further evaluate its broader applicability, we extend our analysis to FP64-enabled dense TCUs, benchmarking against state-of-the-art approaches as shown in Table~\ref{tab:stencil_performance}.

Despite the absence of hardware-level sparsity acceleration, SparStencil consistently outperforms existing methods across all stencil configurations, achieving speedups ranging from 1.11x to 7.13x. This advantage stems from adaptive layout morphing and search strategy, which automatically restructures stencil matrices to optimize computational and memory throughput, ensuring high efficiency even without specialized sparse TCU acceleration.

These results validate two critical insights: (1) SparStencil remains effective and achieves performance leadership even on dense TCUs, demonstrating its robustness beyond FP16-restricted architectures. (2) Future sparse TCUs with FP64 support will further amplify SparStencil’s benefits, as our sparse-aware optimization framework is inherently aligned with next-generation hardware trends.




\section{Related Work}

The optimization of stencil computations has been extensively studied on both CPUs~\cite{10.1145/3458817.3476154, yuansc17} and GPUs~\cite{8451874, 6114468, 10.1145/2491956.2462176}. 

On CPUs, vectorization~\cite{10.1145/3458817.3476154, 10.1145/2464996.2467268, 10.1007/978-3-642-19861-8_13, 10579118}, data reuse strategies~\cite{8665800, 10.1145/2666356.2594342, zhaosc19}, and tiling techniques, such as diamond~\cite{10.1109/SC.2012.107, 10.1145/1375581.1375595}, time skewing~\cite{1592833, 10.1023/A:1015460304860}, rectangular~\cite{1592745}, and tessellating tiling~\cite{yuansc17}, are key methods for improving performance. 

On GPUs, spatial tiling~\cite{7034720, zhaosc19, maruyama2014optimizing}, temporal tiling~\cite{Holewinski12ics, 10.1145/1542275.1542313, 10.1145/2830018.2830025, 10.1145/2400682.2400713, 7582549, 10.1145/2458523.2458526}, unrolling~\cite{10.1145/3469030}, prefetching~\cite{8820786}, and streaming~\cite{8451874} techniques effectively leverage GPU parallelism and memory hierarchies. Frameworks like Brick~\cite{zhaosc19, 10.1145/3437801.3441598, 8639931} exploit fine-grained data reuse, while DRStencil~\cite{you2021drstencil} applies fusion-partition optimization for stencil acceleration. Recent research also explores Tensor Cores, with TCStencil~\cite{liu2022toward} applying them directly, ConvStencil~\cite{10.1145/3627535.3638476} bridging convolution and stencil computations, FlashFFT~\cite{10.1145/3710848.3710897} restructures FFT-based stencil computations on TCU and LoraStencil~\cite{SC41406.2024.00059} reduces redundancy through low-rank decomposition of stencil. Furthermore, libraries like cuDNN~\cite{cudnn, chetlur2014cudnn} optimize stencil-related operations, and AMOS~\cite{zheng2022amos} maps depth-wise convolutions, equivalent to stencil operations, to hardware units, including Tensor Cores.

\section{Conclusion}

This paper presents SparStencil, a novel system leveraging sparse Tensor Cores for stencil computations via Permutation Invariant Transformation. It integrates Adaptive Layout Morphing, Structured Sparsity Conversion, and Automatic Kernel Generation to optimize stencil computations on sparse TCUs. Evaluations demonstrate its effectiveness, outperforming state-of-the-art approaches, with potential benefits for scientific and engineering applications.

\newpage

\bibliographystyle{ACM-Reference-Format}
\bibliography{main}


\begin{thebibliography}{53}


\ifx \showCODEN    \undefined \def \showCODEN     #1{\unskip}     \fi
\ifx \showISBNx    \undefined \def \showISBNx     #1{\unskip}     \fi
\ifx \showISBNxiii \undefined \def \showISBNxiii  #1{\unskip}     \fi
\ifx \showISSN     \undefined \def \showISSN      #1{\unskip}     \fi
\ifx \showLCCN     \undefined \def \showLCCN      #1{\unskip}     \fi
\ifx \shownote     \undefined \def \shownote      #1{#1}          \fi
\ifx \showarticletitle \undefined \def \showarticletitle #1{#1}   \fi
\ifx \showURL      \undefined \def \showURL       {\relax}        \fi
\providecommand\bibfield[2]{#2}
\providecommand\bibinfo[2]{#2}
\providecommand\natexlab[1]{#1}
\providecommand\showeprint[2][]{arXiv:#2}

\bibitem[Ao et~al\mbox{.}(2017)]%
        {AoIPDPS}
\bibfield{author}{\bibinfo{person}{Yulong Ao}, \bibinfo{person}{Chao Yang}, \bibinfo{person}{Xinliang Wang}, \bibinfo{person}{Wei Xue}, \bibinfo{person}{Haohuan Fu}, \bibinfo{person}{Fangfang Liu}, \bibinfo{person}{Lin Gan}, \bibinfo{person}{Ping Xu}, {and} \bibinfo{person}{Wenjing Ma}.} \bibinfo{year}{2017}\natexlab{}.
\newblock \showarticletitle{26 PFLOPS Stencil Computations for Atmospheric Modeling on Sunway TaihuLight}. In \bibinfo{booktitle}{\emph{2017 IEEE International Parallel and Distributed Processing Symposium (IPDPS)}}. \bibinfo{pages}{535--544}.
\newblock
\showISSN{1530-2075}
\href{https://doi.org/10.1109/IPDPS.2017.9}{doi:\nolinkurl{10.1109/IPDPS.2017.9}}


\bibitem[Asanovic et~al\mbox{.}(2006)]%
        {asanovic2006landscape}
\bibfield{author}{\bibinfo{person}{Krste Asanovic}, \bibinfo{person}{Ras Bodik}, \bibinfo{person}{Bryan~Christopher Catanzaro}, \bibinfo{person}{Joseph~James Gebis}, \bibinfo{person}{Parry Husbands}, \bibinfo{person}{Kurt Keutzer}, \bibinfo{person}{David~A Patterson}, \bibinfo{person}{William~Lester Plishker}, \bibinfo{person}{John Shalf}, \bibinfo{person}{Samuel~Webb Williams}, {et~al\mbox{.}}} \bibinfo{year}{2006}\natexlab{}.
\newblock \showarticletitle{The landscape of parallel computing research: A view from berkeley}.
\newblock  (\bibinfo{year}{2006}).
\newblock


\bibitem[Asanovic et~al\mbox{.}(2009)]%
        {10.1145/1562764.1562783}
\bibfield{author}{\bibinfo{person}{Krste Asanovic}, \bibinfo{person}{Rastislav Bodik}, \bibinfo{person}{James Demmel}, \bibinfo{person}{Tony Keaveny}, \bibinfo{person}{Kurt Keutzer}, \bibinfo{person}{John Kubiatowicz}, \bibinfo{person}{Nelson Morgan}, \bibinfo{person}{David Patterson}, \bibinfo{person}{Koushik Sen}, \bibinfo{person}{John Wawrzynek}, \bibinfo{person}{David Wessel}, {and} \bibinfo{person}{Katherine Yelick}.} \bibinfo{year}{2009}\natexlab{}.
\newblock \showarticletitle{A View of the Parallel Computing Landscape}.
\newblock \bibinfo{journal}{\emph{Commun. ACM}} \bibinfo{volume}{52}, \bibinfo{number}{10} (\bibinfo{date}{oct} \bibinfo{year}{2009}), \bibinfo{pages}{56–67}.
\newblock
\showISSN{0001-0782}
\href{https://doi.org/10.1145/1562764.1562783}{doi:\nolinkurl{10.1145/1562764.1562783}}


\bibitem[Bandishti et~al\mbox{.}(2012)]%
        {10.1109/SC.2012.107}
\bibfield{author}{\bibinfo{person}{Vinayaka Bandishti}, \bibinfo{person}{Irshad Pananilath}, {and} \bibinfo{person}{Uday Bondhugula}.} \bibinfo{year}{2012}\natexlab{}.
\newblock \showarticletitle{Tiling Stencil Computations to Maximize Parallelism}. In \bibinfo{booktitle}{\emph{Proceedings of the 2012 International Conference for High Performance Computing, Networking, Storage and Analysis}} \emph{(\bibinfo{series}{SC '12})}. \bibinfo{publisher}{IEEE Computer Society}, \bibinfo{address}{USA}, \bibinfo{pages}{1–11}.
\newblock
\showISBNx{9781467308069}
\href{https://doi.org/10.1109/SC.2012.107}{doi:\nolinkurl{10.1109/SC.2012.107}}


\bibitem[Ben-Nun et~al\mbox{.}(2022)]%
        {Ben-NunSC}
\bibfield{author}{\bibinfo{person}{Tal Ben-Nun}, \bibinfo{person}{Linus Groner}, \bibinfo{person}{Florian Deconinck}, \bibinfo{person}{Tobias Wicky}, \bibinfo{person}{Eddie Davis}, \bibinfo{person}{Johann Dahm}, \bibinfo{person}{Oliver~D. Elbert}, \bibinfo{person}{Rhea George}, \bibinfo{person}{Jeremy McGibbon}, \bibinfo{person}{Lukas Trümper}, \bibinfo{person}{Elynn Wu}, \bibinfo{person}{Oliver Fuhrer}, \bibinfo{person}{Thomas Schulthess}, {and} \bibinfo{person}{Torsten Hoefler}.} \bibinfo{year}{2022}\natexlab{}.
\newblock \showarticletitle{Productive Performance Engineering for Weather and Climate Modeling with Python}. In \bibinfo{booktitle}{\emph{SC22: International Conference for High Performance Computing, Networking, Storage and Analysis}}. \bibinfo{pages}{1--14}.
\newblock
\showISSN{2167-4337}
\href{https://doi.org/10.1109/SC41404.2022.00078}{doi:\nolinkurl{10.1109/SC41404.2022.00078}}


\bibitem[Bondhugula et~al\mbox{.}(2017)]%
        {7582549}
\bibfield{author}{\bibinfo{person}{Uday Bondhugula}, \bibinfo{person}{Vinayaka Bandishti}, {and} \bibinfo{person}{Irshad Pananilath}.} \bibinfo{year}{2017}\natexlab{}.
\newblock \showarticletitle{Diamond Tiling: Tiling Techniques to Maximize Parallelism for Stencil Computations}.
\newblock \bibinfo{journal}{\emph{IEEE Transactions on Parallel and Distributed Systems}} \bibinfo{volume}{28}, \bibinfo{number}{5} (\bibinfo{date}{May} \bibinfo{year}{2017}), \bibinfo{pages}{1285--1298}.
\newblock
\showISSN{1558-2183}
\href{https://doi.org/10.1109/TPDS.2016.2615094}{doi:\nolinkurl{10.1109/TPDS.2016.2615094}}


\bibitem[Bondhugula et~al\mbox{.}(2008)]%
        {10.1145/1375581.1375595}
\bibfield{author}{\bibinfo{person}{Uday Bondhugula}, \bibinfo{person}{Albert Hartono}, \bibinfo{person}{J. Ramanujam}, {and} \bibinfo{person}{P. Sadayappan}.} \bibinfo{year}{2008}\natexlab{}.
\newblock \showarticletitle{A Practical Automatic Polyhedral Parallelizer and Locality Optimizer}. In \bibinfo{booktitle}{\emph{Proceedings of the 29th ACM SIGPLAN Conference on Programming Language Design and Implementation}} (Tucson, AZ, USA) \emph{(\bibinfo{series}{PLDI '08})}. \bibinfo{publisher}{Association for Computing Machinery}, \bibinfo{address}{New York, NY, USA}, \bibinfo{pages}{101–113}.
\newblock
\showISBNx{9781595938602}
\href{https://doi.org/10.1145/1375581.1375595}{doi:\nolinkurl{10.1145/1375581.1375595}}


\bibitem[Chen et~al\mbox{.}(2019)]%
        {chen2019}
\bibfield{author}{\bibinfo{person}{Peng Chen}, \bibinfo{person}{Mohamed Wahib}, \bibinfo{person}{Shinichiro Takizawa}, \bibinfo{person}{Ryousei Takano}, {and} \bibinfo{person}{Satoshi Matsuoka}.} \bibinfo{year}{2019}\natexlab{}.
\newblock \showarticletitle{A Versatile Software Systolic Execution Model for GPU Memory-Bound Kernels}. In \bibinfo{booktitle}{\emph{Proceedings of the International Conference for High Performance Computing, Networking, Storage and Analysis}} (Denver, Colorado) \emph{(\bibinfo{series}{SC '19})}. \bibinfo{publisher}{Association for Computing Machinery}, \bibinfo{address}{New York, NY, USA}, Article \bibinfo{articleno}{53}, \bibinfo{numpages}{81}~pages.
\newblock
\showISBNx{9781450362290}
\href{https://doi.org/10.1145/3295500.3356162}{doi:\nolinkurl{10.1145/3295500.3356162}}


\bibitem[Chen et~al\mbox{.}(2024)]%
        {10.1145/3627535.3638476}
\bibfield{author}{\bibinfo{person}{Yuetao Chen}, \bibinfo{person}{Kun Li}, \bibinfo{person}{Yuhao Wang}, \bibinfo{person}{Donglin Bai}, \bibinfo{person}{Lei Wang}, \bibinfo{person}{Lingxiao Ma}, \bibinfo{person}{Liang Yuan}, \bibinfo{person}{Yunquan Zhang}, \bibinfo{person}{Ting Cao}, {and} \bibinfo{person}{Mao Yang}.} \bibinfo{year}{2024}\natexlab{}.
\newblock \showarticletitle{ConvStencil: Transform Stencil Computation to Matrix Multiplication on Tensor Cores}. In \bibinfo{booktitle}{\emph{Proceedings of the 29th ACM SIGPLAN Annual Symposium on Principles and Practice of Parallel Programming}} (Edinburgh, United Kingdom) \emph{(\bibinfo{series}{PPoPP '24})}. \bibinfo{publisher}{Association for Computing Machinery}, \bibinfo{address}{New York, NY, USA}, \bibinfo{pages}{333–347}.
\newblock
\showISBNx{9798400704352}
\href{https://doi.org/10.1145/3627535.3638476}{doi:\nolinkurl{10.1145/3627535.3638476}}


\bibitem[Chetlur et~al\mbox{.}(2014)]%
        {chetlur2014cudnn}
\bibfield{author}{\bibinfo{person}{Sharan Chetlur}, \bibinfo{person}{Cliff Woolley}, \bibinfo{person}{Philippe Vandermersch}, \bibinfo{person}{Jonathan Cohen}, \bibinfo{person}{John Tran}, \bibinfo{person}{Bryan Catanzaro}, {and} \bibinfo{person}{Evan Shelhamer}.} \bibinfo{year}{2014}\natexlab{}.
\newblock \showarticletitle{cudnn: Efficient primitives for deep learning}.
\newblock \bibinfo{journal}{\emph{arXiv preprint arXiv:1410.0759}} (\bibinfo{year}{2014}).
\newblock


\bibitem[Denzler et~al\mbox{.}(2023)]%
        {denzler2023casperacceleratingstencilcomputation}
\bibfield{author}{\bibinfo{person}{Alain Denzler}, \bibinfo{person}{Rahul Bera}, \bibinfo{person}{Nastaran Hajinazar}, \bibinfo{person}{Gagandeep Singh}, \bibinfo{person}{Geraldo~F. Oliveira}, \bibinfo{person}{Juan Gómez-Luna}, {and} \bibinfo{person}{Onur Mutlu}.} \bibinfo{year}{2023}\natexlab{}.
\newblock \bibinfo{title}{Casper: Accelerating Stencil Computation using Near-cache Processing}.
\newblock
\showeprint[arxiv]{2112.14216}~[cs.AR]
\urldef\tempurl%
\url{https://arxiv.org/abs/2112.14216}
\showURL{%
\tempurl}


\bibitem[Edmonds(1965)]%
        {Edmonds_1965}
\bibfield{author}{\bibinfo{person}{Jack Edmonds}.} \bibinfo{year}{1965}\natexlab{}.
\newblock \showarticletitle{Paths, Trees, and Flowers}.
\newblock \bibinfo{journal}{\emph{Canadian Journal of Mathematics}}  \bibinfo{volume}{17} (\bibinfo{year}{1965}), \bibinfo{pages}{449–467}.
\newblock
\href{https://doi.org/10.4153/CJM-1965-045-4}{doi:\nolinkurl{10.4153/CJM-1965-045-4}}


\bibitem[Falch and Elster(2014)]%
        {7034720}
\bibfield{author}{\bibinfo{person}{Thomas~L. Falch} {and} \bibinfo{person}{Anne~C. Elster}.} \bibinfo{year}{2014}\natexlab{}.
\newblock \showarticletitle{Register Caching for Stencil Computations on GPUs}. In \bibinfo{booktitle}{\emph{2014 16th International Symposium on Symbolic and Numeric Algorithms for Scientific Computing}}. \bibinfo{pages}{479--486}.
\newblock
\href{https://doi.org/10.1109/SYNASC.2014.70}{doi:\nolinkurl{10.1109/SYNASC.2014.70}}


\bibitem[Grosser et~al\mbox{.}(2013)]%
        {10.1145/2458523.2458526}
\bibfield{author}{\bibinfo{person}{Tobias Grosser}, \bibinfo{person}{Albert Cohen}, \bibinfo{person}{Paul H.~J. Kelly}, \bibinfo{person}{J. Ramanujam}, \bibinfo{person}{P. Sadayappan}, {and} \bibinfo{person}{Sven Verdoolaege}.} \bibinfo{year}{2013}\natexlab{}.
\newblock \showarticletitle{Split Tiling for GPUs: Automatic Parallelization Using Trapezoidal Tiles}. In \bibinfo{booktitle}{\emph{Proceedings of the 6th Workshop on General Purpose Processor Using Graphics Processing Units}} (Houston, Texas, USA) \emph{(\bibinfo{series}{GPGPU-6})}. \bibinfo{publisher}{Association for Computing Machinery}, \bibinfo{address}{New York, NY, USA}, \bibinfo{pages}{24–31}.
\newblock
\showISBNx{9781450320177}
\href{https://doi.org/10.1145/2458523.2458526}{doi:\nolinkurl{10.1145/2458523.2458526}}


\bibitem[Gysi et~al\mbox{.}(2021)]%
        {10.1145/3469030}
\bibfield{author}{\bibinfo{person}{Tobias Gysi}, \bibinfo{person}{Christoph M\"{u}ller}, \bibinfo{person}{Oleksandr Zinenko}, \bibinfo{person}{Stephan Herhut}, \bibinfo{person}{Eddie Davis}, \bibinfo{person}{Tobias Wicky}, \bibinfo{person}{Oliver Fuhrer}, \bibinfo{person}{Torsten Hoefler}, {and} \bibinfo{person}{Tobias Grosser}.} \bibinfo{year}{2021}\natexlab{}.
\newblock \showarticletitle{Domain-Specific Multi-Level IR Rewriting for GPU: The Open Earth Compiler for GPU-Accelerated Climate Simulation}.
\newblock \bibinfo{journal}{\emph{ACM Trans. Archit. Code Optim.}} \bibinfo{volume}{18}, \bibinfo{number}{4}, Article \bibinfo{articleno}{51} (\bibinfo{date}{sep} \bibinfo{year}{2021}), \bibinfo{numpages}{23}~pages.
\newblock
\showISSN{1544-3566}
\href{https://doi.org/10.1145/3469030}{doi:\nolinkurl{10.1145/3469030}}


\bibitem[Han et~al\mbox{.}(2025)]%
        {10.1145/3710848.3710897}
\bibfield{author}{\bibinfo{person}{Haozhi Han}, \bibinfo{person}{Kun Li}, \bibinfo{person}{Wei Cui}, \bibinfo{person}{Donglin Bai}, \bibinfo{person}{Yiwei Zhang}, \bibinfo{person}{Liang Yuan}, \bibinfo{person}{Yifeng Chen}, \bibinfo{person}{Yunquan Zhang}, \bibinfo{person}{Ting Cao}, {and} \bibinfo{person}{Mao Yang}.} \bibinfo{year}{2025}\natexlab{}.
\newblock \showarticletitle{FlashFFTStencil: Bridging Fast Fourier Transforms to Memory-Efficient Stencil Computations on Tensor Core Units}. In \bibinfo{booktitle}{\emph{Proceedings of the 30th ACM SIGPLAN Annual Symposium on Principles and Practice of Parallel Programming}} (Las Vegas, NV, USA) \emph{(\bibinfo{series}{PPoPP '25})}. \bibinfo{publisher}{Association for Computing Machinery}, \bibinfo{address}{New York, NY, USA}, \bibinfo{pages}{355–368}.
\newblock
\showISBNx{9798400714436}
\href{https://doi.org/10.1145/3710848.3710897}{doi:\nolinkurl{10.1145/3710848.3710897}}


\bibitem[Henretty et~al\mbox{.}(2011)]%
        {10.1007/978-3-642-19861-8_13}
\bibfield{author}{\bibinfo{person}{Tom Henretty}, \bibinfo{person}{Kevin Stock}, \bibinfo{person}{Louis-No{\"e}l Pouchet}, \bibinfo{person}{Franz Franchetti}, \bibinfo{person}{J. Ramanujam}, {and} \bibinfo{person}{P. Sadayappan}.} \bibinfo{year}{2011}\natexlab{}.
\newblock \showarticletitle{Data Layout Transformation for Stencil Computations on Short-Vector SIMD Architectures}. In \bibinfo{booktitle}{\emph{Compiler Construction}}, \bibfield{editor}{\bibinfo{person}{Jens Knoop}} (Ed.). \bibinfo{publisher}{Springer Berlin Heidelberg}, \bibinfo{address}{Berlin, Heidelberg}, \bibinfo{pages}{225--245}.
\newblock
\showISBNx{978-3-642-19861-8}


\bibitem[Henretty et~al\mbox{.}(2013)]%
        {10.1145/2464996.2467268}
\bibfield{author}{\bibinfo{person}{Tom Henretty}, \bibinfo{person}{Richard Veras}, \bibinfo{person}{Franz Franchetti}, \bibinfo{person}{Louis-No\"{e}l Pouchet}, \bibinfo{person}{J. Ramanujam}, {and} \bibinfo{person}{P. Sadayappan}.} \bibinfo{year}{2013}\natexlab{}.
\newblock \showarticletitle{A Stencil Compiler for Short-Vector SIMD Architectures}. In \bibinfo{booktitle}{\emph{Proceedings of the 27th International ACM Conference on International Conference on Supercomputing}} (Eugene, Oregon, USA) \emph{(\bibinfo{series}{ICS '13})}. \bibinfo{publisher}{Association for Computing Machinery}, \bibinfo{address}{New York, NY, USA}, \bibinfo{pages}{13–24}.
\newblock
\showISBNx{9781450321303}
\href{https://doi.org/10.1145/2464996.2467268}{doi:\nolinkurl{10.1145/2464996.2467268}}


\bibitem[Holewinski et~al\mbox{.}(2012)]%
        {Holewinski12ics}
\bibfield{author}{\bibinfo{person}{Justin Holewinski}, \bibinfo{person}{Louis-No\"{e}l Pouchet}, {and} \bibinfo{person}{P. Sadayappan}.} \bibinfo{year}{2012}\natexlab{}.
\newblock \showarticletitle{High-Performance Code Generation for Stencil Computations on GPU Architectures}. In \bibinfo{booktitle}{\emph{Proceedings of the 26th ACM International Conference on Supercomputing}} (San Servolo Island, Venice, Italy) \emph{(\bibinfo{series}{ICS '12})}. \bibinfo{publisher}{Association for Computing Machinery}, \bibinfo{address}{New York, NY, USA}, \bibinfo{pages}{311–320}.
\newblock
\showISBNx{9781450313162}
\href{https://doi.org/10.1145/2304576.2304619}{doi:\nolinkurl{10.1145/2304576.2304619}}


\bibitem[Huynh et~al\mbox{.}(2014)]%
        {HUYNH2014209}
\bibfield{author}{\bibinfo{person}{H.T. Huynh}, \bibinfo{person}{Z.J. Wang}, {and} \bibinfo{person}{P.E. Vincent}.} \bibinfo{year}{2014}\natexlab{}.
\newblock \showarticletitle{High-order methods for computational fluid dynamics: A brief review of compact differential formulations on unstructured grids}.
\newblock \bibinfo{journal}{\emph{Computers \& Fluids}}  \bibinfo{volume}{98} (\bibinfo{year}{2014}), \bibinfo{pages}{209--220}.
\newblock
\showISSN{0045-7930}
\href{https://doi.org/10.1016/j.compfluid.2013.12.007}{doi:\nolinkurl{10.1016/j.compfluid.2013.12.007}}
\newblock
\shownote{12th USNCCM mini-symposium of High-Order Methods for Computational Fluid Dynamics - A special issue dedicated to the 80th birthday of Professor Antony Jameson}.


\bibitem[Jacquelin et~al\mbox{.}(2022)]%
        {MathiasSC22}
\bibfield{author}{\bibinfo{person}{Mathias Jacquelin}, \bibinfo{person}{Mauricio Araya–Polo}, {and} \bibinfo{person}{Jie Meng}.} \bibinfo{year}{2022}\natexlab{}.
\newblock \showarticletitle{Scalable Distributed High-Order Stencil Computations}. In \bibinfo{booktitle}{\emph{SC22: International Conference for High Performance Computing, Networking, Storage and Analysis}}. \bibinfo{pages}{1--13}.
\newblock
\showISSN{2167-4337}
\href{https://doi.org/10.1109/SC41404.2022.00035}{doi:\nolinkurl{10.1109/SC41404.2022.00035}}


\bibitem[Jin et~al\mbox{.}(2001)]%
        {1592833}
\bibfield{author}{\bibinfo{person}{Guohua Jin}, \bibinfo{person}{J. Mellor-Crummey}, {and} \bibinfo{person}{R. Fowler}.} \bibinfo{year}{2001}\natexlab{}.
\newblock \showarticletitle{Increasing Temporal Locality with Skewing and Recursive Blocking}. In \bibinfo{booktitle}{\emph{SC '01: Proceedings of the 2001 ACM/IEEE Conference on Supercomputing}}. \bibinfo{pages}{57--57}.
\newblock
\href{https://doi.org/10.1109/SC.2001.10041}{doi:\nolinkurl{10.1109/SC.2001.10041}}


\bibitem[Li et~al\mbox{.}(2021)]%
        {10.1145/3458817.3476154}
\bibfield{author}{\bibinfo{person}{Kun Li}, \bibinfo{person}{Liang Yuan}, \bibinfo{person}{Yunquan Zhang}, {and} \bibinfo{person}{Yue Yue}.} \bibinfo{year}{2021}\natexlab{}.
\newblock \showarticletitle{Reducing Redundancy in Data Organization and Arithmetic Calculation for Stencil Computations}. In \bibinfo{booktitle}{\emph{Proceedings of the International Conference for High Performance Computing, Networking, Storage and Analysis}} (St. Louis, Missouri) \emph{(\bibinfo{series}{SC '21})}. \bibinfo{publisher}{Association for Computing Machinery}, \bibinfo{address}{New York, NY, USA}, Article \bibinfo{articleno}{84}, \bibinfo{numpages}{15}~pages.
\newblock
\showISBNx{9781450384421}
\href{https://doi.org/10.1145/3458817.3476154}{doi:\nolinkurl{10.1145/3458817.3476154}}


\bibitem[Liu et~al\mbox{.}(2022)]%
        {liu2022toward}
\bibfield{author}{\bibinfo{person}{Xiaoyan Liu}, \bibinfo{person}{Yi Liu}, \bibinfo{person}{Hailong Yang}, \bibinfo{person}{Jianjin Liao}, \bibinfo{person}{Mingzhen Li}, \bibinfo{person}{Zhongzhi Luan}, {and} \bibinfo{person}{Depei Qian}.} \bibinfo{year}{2022}\natexlab{}.
\newblock \showarticletitle{Toward accelerated stencil computation by adapting tensor core unit on GPU}. In \bibinfo{booktitle}{\emph{Proceedings of the 36th ACM International Conference on Supercomputing}}. \bibinfo{pages}{1--12}.
\newblock


\bibitem[Lusher et~al\mbox{.}(2021)]%
        {LUSHER2021108063}
\bibfield{author}{\bibinfo{person}{David~J. Lusher}, \bibinfo{person}{Satya~P. Jammy}, {and} \bibinfo{person}{Neil~D. Sandham}.} \bibinfo{year}{2021}\natexlab{}.
\newblock \showarticletitle{OpenSBLI: Automated code-generation for heterogeneous computing architectures applied to compressible fluid dynamics on structured grids}.
\newblock \bibinfo{journal}{\emph{Computer Physics Communications}}  \bibinfo{volume}{267} (\bibinfo{year}{2021}), \bibinfo{pages}{108063}.
\newblock
\showISSN{0010-4655}
\href{https://doi.org/10.1016/j.cpc.2021.108063}{doi:\nolinkurl{10.1016/j.cpc.2021.108063}}


\bibitem[Maruyama and Aoki(2014)]%
        {maruyama2014optimizing}
\bibfield{author}{\bibinfo{person}{Naoya Maruyama} {and} \bibinfo{person}{Takayuki Aoki}.} \bibinfo{year}{2014}\natexlab{}.
\newblock \showarticletitle{Optimizing stencil computations for NVIDIA Kepler GPUs}. In \bibinfo{booktitle}{\emph{Proceedings of the 1st international workshop on high-performance stencil computations, Vienna}}. Citeseer, \bibinfo{pages}{89--95}.
\newblock


\bibitem[Maruyama et~al\mbox{.}(2011)]%
        {6114468}
\bibfield{author}{\bibinfo{person}{Naoya Maruyama}, \bibinfo{person}{Kento Sato}, \bibinfo{person}{Tatsuo Nomura}, {and} \bibinfo{person}{Satoshi Matsuoka}.} \bibinfo{year}{2011}\natexlab{}.
\newblock \showarticletitle{Physis: An implicitly parallel programming model for stencil computations on large-scale GPU-accelerated supercomputers}. In \bibinfo{booktitle}{\emph{SC '11: Proceedings of 2011 International Conference for High Performance Computing, Networking, Storage and Analysis}}. \bibinfo{pages}{1--12}.
\newblock
\showISSN{2167-4337}
\href{https://doi.org/10.1145/2063384.2063398}{doi:\nolinkurl{10.1145/2063384.2063398}}


\bibitem[Matsumura et~al\mbox{.}(2020)]%
        {matsumura2020an5d}
\bibfield{author}{\bibinfo{person}{Kazuaki Matsumura}, \bibinfo{person}{Hamid~Reza Zohouri}, \bibinfo{person}{Mohamed Wahib}, \bibinfo{person}{Toshio Endo}, {and} \bibinfo{person}{Satoshi Matsuoka}.} \bibinfo{year}{2020}\natexlab{}.
\newblock \showarticletitle{AN5D: automated stencil framework for high-degree temporal blocking on GPUs}. In \bibinfo{booktitle}{\emph{Proceedings of the 18th ACM/IEEE International Symposium on Code Generation and Optimization}}. \bibinfo{pages}{199--211}.
\newblock


\bibitem[Meng and Skadron(2009)]%
        {10.1145/1542275.1542313}
\bibfield{author}{\bibinfo{person}{Jiayuan Meng} {and} \bibinfo{person}{Kevin Skadron}.} \bibinfo{year}{2009}\natexlab{}.
\newblock \showarticletitle{Performance Modeling and Automatic Ghost Zone Optimization for Iterative Stencil Loops on GPUs}. In \bibinfo{booktitle}{\emph{Proceedings of the 23rd International Conference on Supercomputing}} (Yorktown Heights, NY, USA) \emph{(\bibinfo{series}{ICS '09})}. \bibinfo{publisher}{Association for Computing Machinery}, \bibinfo{address}{New York, NY, USA}, \bibinfo{pages}{256–265}.
\newblock
\showISBNx{9781605584980}
\href{https://doi.org/10.1145/1542275.1542313}{doi:\nolinkurl{10.1145/1542275.1542313}}


\bibitem[{Nvidia}(2023)]%
        {cudnn}
\bibfield{author}{\bibinfo{person}{{Nvidia}}.} \bibinfo{year}{2023}\natexlab{}.
\newblock \bibinfo{title}{cuDNN}.
\newblock
\newblock
\shownote{\url{https://developer.nvidia.com/cudnn}, Last accessed on 2023-7-24}.


\bibitem[{NVIDIA Corporation}(2020)]%
        {nvidia_a100_datasheet}
\bibfield{author}{\bibinfo{person}{{NVIDIA Corporation}}.} \bibinfo{year}{2020}\natexlab{}.
\newblock \bibinfo{booktitle}{\emph{NVIDIA A100 Tensor Core GPU Datasheet}}.
\newblock
\urldef\tempurl%
\url{https://www.nvidia.com/en-us/data-center/a100/}
\showURL{%
\tempurl}
\newblock
\shownote{Accessed: 2024-11-21}.


\bibitem[{NVIDIA Corporation}(2024)]%
        {nvidia_ptx_isa_8.5}
\bibfield{author}{\bibinfo{person}{{NVIDIA Corporation}}.} \bibinfo{year}{2024}\natexlab{}.
\newblock \bibinfo{booktitle}{\emph{Parallel Thread Execution ISA Version 8.5}}.
\newblock
\urldef\tempurl%
\url{https://docs.nvidia.com/cuda/parallel-thread-execution/index.html}
\showURL{%
\tempurl}
\newblock
\shownote{Accessed: 2024-11-21}.


\bibitem[Pool et~al\mbox{.}(2021)]%
        {pool2021sparsity}
\bibfield{author}{\bibinfo{person}{Jeff Pool}, \bibinfo{person}{Abhishek Sawarkar}, {and} \bibinfo{person}{Jay Rodge}.} \bibinfo{year}{2021}\natexlab{}.
\newblock \bibinfo{title}{Accelerating Inference with Sparsity Using the NVIDIA Ampere Architecture and NVIDIA TensorRT}.
\newblock
\urldef\tempurl%
\url{https://developer.nvidia.com/blog/accelerating-inference-with-sparsity-using-ampere-and-tensorrt/}
\showURL{%
\tempurl}
\newblock
\shownote{Accessed: 2024-11-22}.


\bibitem[Ragan-Kelley et~al\mbox{.}(2013)]%
        {10.1145/2491956.2462176}
\bibfield{author}{\bibinfo{person}{Jonathan Ragan-Kelley}, \bibinfo{person}{Connelly Barnes}, \bibinfo{person}{Andrew Adams}, \bibinfo{person}{Sylvain Paris}, \bibinfo{person}{Fr\'{e}do Durand}, {and} \bibinfo{person}{Saman Amarasinghe}.} \bibinfo{year}{2013}\natexlab{}.
\newblock \showarticletitle{Halide: A Language and Compiler for Optimizing Parallelism, Locality, and Recomputation in Image Processing Pipelines}. In \bibinfo{booktitle}{\emph{Proceedings of the 34th ACM SIGPLAN Conference on Programming Language Design and Implementation}} (Seattle, Washington, USA) \emph{(\bibinfo{series}{PLDI '13})}. \bibinfo{publisher}{Association for Computing Machinery}, \bibinfo{address}{New York, NY, USA}, \bibinfo{pages}{519–530}.
\newblock
\showISBNx{9781450320146}
\href{https://doi.org/10.1145/2491956.2462176}{doi:\nolinkurl{10.1145/2491956.2462176}}


\bibitem[Rawat et~al\mbox{.}(2015)]%
        {10.1145/2830018.2830025}
\bibfield{author}{\bibinfo{person}{Prashant Rawat}, \bibinfo{person}{Martin Kong}, \bibinfo{person}{Tom Henretty}, \bibinfo{person}{Justin Holewinski}, \bibinfo{person}{Kevin Stock}, \bibinfo{person}{Louis-No\"{e}l Pouchet}, \bibinfo{person}{J. Ramanujam}, \bibinfo{person}{Atanas Rountev}, {and} \bibinfo{person}{P. Sadayappan}.} \bibinfo{year}{2015}\natexlab{}.
\newblock \showarticletitle{SDSLc: A Multi-Target Domain-Specific Compiler for Stencil Computations}. In \bibinfo{booktitle}{\emph{Proceedings of the 5th International Workshop on Domain-Specific Languages and High-Level Frameworks for High Performance Computing}} (Austin, Texas) \emph{(\bibinfo{series}{WOLFHPC '15})}. \bibinfo{publisher}{Association for Computing Machinery}, \bibinfo{address}{New York, NY, USA}, Article \bibinfo{articleno}{6}, \bibinfo{numpages}{10}~pages.
\newblock
\showISBNx{9781450340168}
\href{https://doi.org/10.1145/2830018.2830025}{doi:\nolinkurl{10.1145/2830018.2830025}}


\bibitem[Rawat et~al\mbox{.}(2018a)]%
        {8665800}
\bibfield{author}{\bibinfo{person}{Prashant~Singh Rawat}, \bibinfo{person}{Aravind Sukumaran-Rajam}, \bibinfo{person}{Atanas Rountev}, \bibinfo{person}{Fabrice Rastello}, \bibinfo{person}{Louis-Noël Pouchet}, {and} \bibinfo{person}{P. Sadayappan}.} \bibinfo{year}{2018}\natexlab{a}.
\newblock \showarticletitle{Associative Instruction Reordering to Alleviate Register Pressure}. In \bibinfo{booktitle}{\emph{SC18: International Conference for High Performance Computing, Networking, Storage and Analysis}}. \bibinfo{pages}{590--602}.
\newblock
\href{https://doi.org/10.1109/SC.2018.00049}{doi:\nolinkurl{10.1109/SC.2018.00049}}


\bibitem[Rawat et~al\mbox{.}(2018b)]%
        {8451874}
\bibfield{author}{\bibinfo{person}{Prashant~Singh Rawat}, \bibinfo{person}{Miheer Vaidya}, \bibinfo{person}{Aravind Sukumaran-Rajam}, \bibinfo{person}{Mahesh Ravishankar}, \bibinfo{person}{Vinod Grover}, \bibinfo{person}{Atanas Rountev}, \bibinfo{person}{Louis-Noël Pouchet}, {and} \bibinfo{person}{P. Sadayappan}.} \bibinfo{year}{2018}\natexlab{b}.
\newblock \showarticletitle{Domain-Specific Optimization and Generation of High-Performance GPU Code for Stencil Computations}.
\newblock \bibinfo{journal}{\emph{Proc. IEEE}} \bibinfo{volume}{106}, \bibinfo{number}{11} (\bibinfo{year}{2018}), \bibinfo{pages}{1902--1920}.
\newblock
\href{https://doi.org/10.1109/JPROC.2018.2862896}{doi:\nolinkurl{10.1109/JPROC.2018.2862896}}


\bibitem[Rawat et~al\mbox{.}(2019)]%
        {8820786}
\bibfield{author}{\bibinfo{person}{Prashant~Singh Rawat}, \bibinfo{person}{Miheer Vaidya}, \bibinfo{person}{Aravind Sukumaran-Rajam}, \bibinfo{person}{Atanas Rountev}, \bibinfo{person}{Louis-Noël Pouchet}, {and} \bibinfo{person}{P. Sadayappan}.} \bibinfo{year}{2019}\natexlab{}.
\newblock \showarticletitle{On Optimizing Complex Stencils on GPUs}. In \bibinfo{booktitle}{\emph{2019 IEEE International Parallel and Distributed Processing Symposium (IPDPS)}}. \bibinfo{pages}{641--652}.
\newblock
\href{https://doi.org/10.1109/IPDPS.2019.00073}{doi:\nolinkurl{10.1109/IPDPS.2019.00073}}


\bibitem[Rivera and Tseng(2000)]%
        {1592745}
\bibfield{author}{\bibinfo{person}{G. Rivera} {and} \bibinfo{person}{Chau-Wen Tseng}.} \bibinfo{year}{2000}\natexlab{}.
\newblock \showarticletitle{Tiling Optimizations for 3D Scientific Computations}. In \bibinfo{booktitle}{\emph{SC '00: Proceedings of the 2000 ACM/IEEE Conference on Supercomputing}}. \bibinfo{pages}{32--32}.
\newblock
\href{https://doi.org/10.1109/SC.2000.10015}{doi:\nolinkurl{10.1109/SC.2000.10015}}


\bibitem[Stock et~al\mbox{.}(2014)]%
        {10.1145/2666356.2594342}
\bibfield{author}{\bibinfo{person}{Kevin Stock}, \bibinfo{person}{Martin Kong}, \bibinfo{person}{Tobias Grosser}, \bibinfo{person}{Louis-No\"{e}l Pouchet}, \bibinfo{person}{Fabrice Rastello}, \bibinfo{person}{J. Ramanujam}, {and} \bibinfo{person}{P. Sadayappan}.} \bibinfo{year}{2014}\natexlab{}.
\newblock \showarticletitle{A Framework for Enhancing Data Reuse via Associative Reordering}.
\newblock \bibinfo{journal}{\emph{SIGPLAN Not.}} \bibinfo{volume}{49}, \bibinfo{number}{6} (\bibinfo{date}{jun} \bibinfo{year}{2014}), \bibinfo{pages}{65–76}.
\newblock
\showISSN{0362-1340}
\href{https://doi.org/10.1145/2666356.2594342}{doi:\nolinkurl{10.1145/2666356.2594342}}


\bibitem[Verdoolaege et~al\mbox{.}(2013)]%
        {10.1145/2400682.2400713}
\bibfield{author}{\bibinfo{person}{Sven Verdoolaege}, \bibinfo{person}{Juan Carlos~Juega}, \bibinfo{person}{Albert Cohen}, \bibinfo{person}{Jos\'{e} Ignacio~G\'{o}mez}, \bibinfo{person}{Christian Tenllado}, {and} \bibinfo{person}{Francky Catthoor}.} \bibinfo{year}{2013}\natexlab{}.
\newblock \showarticletitle{Polyhedral Parallel Code Generation for CUDA}.
\newblock \bibinfo{journal}{\emph{ACM Trans. Archit. Code Optim.}} \bibinfo{volume}{9}, \bibinfo{number}{4}, Article \bibinfo{articleno}{54} (\bibinfo{date}{jan} \bibinfo{year}{2013}), \bibinfo{numpages}{23}~pages.
\newblock
\showISSN{1544-3566}
\href{https://doi.org/10.1145/2400682.2400713}{doi:\nolinkurl{10.1145/2400682.2400713}}


\bibitem[Wang et~al\mbox{.}(2024)]%
        {10579118}
\bibfield{author}{\bibinfo{person}{Luhan Wang}, \bibinfo{person}{Haipeng Jia}, \bibinfo{person}{Lei Xu}, \bibinfo{person}{Cunyang Wei}, \bibinfo{person}{Kun Li}, \bibinfo{person}{Xianmeng Jiang}, {and} \bibinfo{person}{Yunquan Zhang}.} \bibinfo{year}{2024}\natexlab{}.
\newblock \showarticletitle{VNEC: A Vectorized Non-Empty Column Format for SpMV on CPUs}. In \bibinfo{booktitle}{\emph{2024 IEEE International Parallel and Distributed Processing Symposium (IPDPS)}}. \bibinfo{pages}{14--25}.
\newblock
\href{https://doi.org/10.1109/IPDPS57955.2024.00011}{doi:\nolinkurl{10.1109/IPDPS57955.2024.00011}}


\bibitem[Wonnacott(2002)]%
        {10.1023/A:1015460304860}
\bibfield{author}{\bibinfo{person}{David Wonnacott}.} \bibinfo{year}{2002}\natexlab{}.
\newblock \showarticletitle{Achieving Scalable Locality with Time Skewing}.
\newblock \bibinfo{journal}{\emph{Int. J. Parallel Program.}} \bibinfo{volume}{30}, \bibinfo{number}{3} (\bibinfo{date}{jun} \bibinfo{year}{2002}), \bibinfo{pages}{181–221}.
\newblock
\showISSN{0885-7458}
\href{https://doi.org/10.1023/A:1015460304860}{doi:\nolinkurl{10.1023/A:1015460304860}}


\bibitem[You et~al\mbox{.}(2021)]%
        {you2021drstencil}
\bibfield{author}{\bibinfo{person}{Xin You}, \bibinfo{person}{Hailong Yang}, \bibinfo{person}{Zhonghui Jiang}, \bibinfo{person}{Zhongzhi Luan}, {and} \bibinfo{person}{Depei Qian}.} \bibinfo{year}{2021}\natexlab{}.
\newblock \showarticletitle{DRStencil: Exploiting Data Reuse within Low-order Stencil on GPU}. In \bibinfo{booktitle}{\emph{2021 IEEE 23rd Int Conf on High Performance Computing \& Communications; 7th Int Conf on Data Science \& Systems; 19th Int Conf on Smart City; 7th Int Conf on Dependability in Sensor, Cloud \& Big Data Systems \& Application (HPCC/DSS/SmartCity/DependSys)}}. IEEE, \bibinfo{pages}{63--70}.
\newblock


\bibitem[Yuan et~al\mbox{.}(2019)]%
        {yuanicpp19}
\bibfield{author}{\bibinfo{person}{Liang Yuan}, \bibinfo{person}{Shan Huang}, \bibinfo{person}{Yunquan Zhang}, {and} \bibinfo{person}{Hang Cao}.} \bibinfo{year}{2019}\natexlab{}.
\newblock \showarticletitle{Tessellating Star Stencils}. In \bibinfo{booktitle}{\emph{Proceedings of the 48th International Conference on Parallel Processing}} (Kyoto, Japan) \emph{(\bibinfo{series}{ICPP '19})}. \bibinfo{publisher}{Association for Computing Machinery}, \bibinfo{address}{New York, NY, USA}, Article \bibinfo{articleno}{43}, \bibinfo{numpages}{10}~pages.
\newblock
\showISBNx{9781450362955}
\href{https://doi.org/10.1145/3337821.3337835}{doi:\nolinkurl{10.1145/3337821.3337835}}


\bibitem[Yuan et~al\mbox{.}(2017)]%
        {yuansc17}
\bibfield{author}{\bibinfo{person}{Liang Yuan}, \bibinfo{person}{Yunquan Zhang}, \bibinfo{person}{Peng Guo}, {and} \bibinfo{person}{Shan Huang}.} \bibinfo{year}{2017}\natexlab{}.
\newblock \showarticletitle{Tessellating Stencils}. In \bibinfo{booktitle}{\emph{Proceedings of the International Conference for High Performance Computing, Networking, Storage and Analysis}} (Denver, Colorado) \emph{(\bibinfo{series}{SC '17})}. \bibinfo{publisher}{Association for Computing Machinery}, \bibinfo{address}{New York, NY, USA}, Article \bibinfo{articleno}{49}, \bibinfo{numpages}{13}~pages.
\newblock
\showISBNx{9781450351140}
\href{https://doi.org/10.1145/3126908.3126920}{doi:\nolinkurl{10.1145/3126908.3126920}}


\bibitem[Zhang et~al\mbox{.}(2023a)]%
        {zhang2023perks}
\bibfield{author}{\bibinfo{person}{Lingqi Zhang}, \bibinfo{person}{Mohamed Wahib}, \bibinfo{person}{Peng Chen}, \bibinfo{person}{Jintao Meng}, \bibinfo{person}{Xiao Wang}, \bibinfo{person}{Toshio Endo}, {and} \bibinfo{person}{Satoshi Matsuoka}.} \bibinfo{year}{2023}\natexlab{a}.
\newblock \showarticletitle{PERKS: a Locality-Optimized Execution Model for Iterative Memory-bound GPU Applications}. In \bibinfo{booktitle}{\emph{Proceedings of the 37th International Conference on Supercomputing}}. \bibinfo{pages}{167--179}.
\newblock


\bibitem[Zhang et~al\mbox{.}(2023b)]%
        {zhang2023revisiting}
\bibfield{author}{\bibinfo{person}{Lingqi Zhang}, \bibinfo{person}{Mohamed Wahib}, \bibinfo{person}{Peng Chen}, \bibinfo{person}{Jintao Meng}, \bibinfo{person}{Xiao Wang}, \bibinfo{person}{Toshio Endo}, {and} \bibinfo{person}{Satoshi Matsuoka}.} \bibinfo{year}{2023}\natexlab{b}.
\newblock \showarticletitle{Revisiting Temporal Blocking Stencil Optimizations}. In \bibinfo{booktitle}{\emph{Proceedings of the 37th International Conference on Supercomputing}}. \bibinfo{pages}{251--263}.
\newblock


\bibitem[Zhang et~al\mbox{.}(2024)]%
        {SC41406.2024.00059}
\bibfield{author}{\bibinfo{person}{Yiwei Zhang}, \bibinfo{person}{Kun Li}, \bibinfo{person}{Liang Yuan}, \bibinfo{person}{Jiawen Cheng}, \bibinfo{person}{Yunquan Zhang}, \bibinfo{person}{Ting Cao}, {and} \bibinfo{person}{Mao Yang}.} \bibinfo{year}{2024}\natexlab{}.
\newblock \showarticletitle{{ LoRAStencil: Low-Rank Adaptation of Stencil Computation on Tensor Cores }}. In \bibinfo{booktitle}{\emph{2024 SC24: International Conference for High Performance Computing, Networking, Storage and Analysis SC}}. \bibinfo{publisher}{IEEE Computer Society}, \bibinfo{address}{Los Alamitos, CA, USA}, \bibinfo{pages}{839--855}.
\newblock
\href{https://doi.org/10.1109/SC41406.2024.00059}{doi:\nolinkurl{10.1109/SC41406.2024.00059}}


\bibitem[Zhao et~al\mbox{.}(2019)]%
        {zhaosc19}
\bibfield{author}{\bibinfo{person}{Tuowen Zhao}, \bibinfo{person}{Protonu Basu}, \bibinfo{person}{Samuel Williams}, \bibinfo{person}{Mary Hall}, {and} \bibinfo{person}{Hans Johansen}.} \bibinfo{year}{2019}\natexlab{}.
\newblock \showarticletitle{Exploiting Reuse and Vectorization in Blocked Stencil Computations on CPUs and GPUs}. In \bibinfo{booktitle}{\emph{Proceedings of the International Conference for High Performance Computing, Networking, Storage and Analysis}} (Denver, Colorado) \emph{(\bibinfo{series}{SC '19})}. \bibinfo{publisher}{Association for Computing Machinery}, \bibinfo{address}{New York, NY, USA}, Article \bibinfo{articleno}{52}, \bibinfo{numpages}{44}~pages.
\newblock
\showISBNx{9781450362290}
\href{https://doi.org/10.1145/3295500.3356210}{doi:\nolinkurl{10.1145/3295500.3356210}}


\bibitem[Zhao et~al\mbox{.}(2021)]%
        {10.1145/3437801.3441598}
\bibfield{author}{\bibinfo{person}{Tuowen Zhao}, \bibinfo{person}{Mary Hall}, \bibinfo{person}{Hans Johansen}, {and} \bibinfo{person}{Samuel Williams}.} \bibinfo{year}{2021}\natexlab{}.
\newblock \showarticletitle{Improving Communication by Optimizing On-Node Data Movement with Data Layout}. In \bibinfo{booktitle}{\emph{Proceedings of the 26th ACM SIGPLAN Symposium on Principles and Practice of Parallel Programming}} (Virtual Event, Republic of Korea) \emph{(\bibinfo{series}{PPoPP '21})}. \bibinfo{publisher}{Association for Computing Machinery}, \bibinfo{address}{New York, NY, USA}, \bibinfo{pages}{304–317}.
\newblock
\showISBNx{9781450382946}
\href{https://doi.org/10.1145/3437801.3441598}{doi:\nolinkurl{10.1145/3437801.3441598}}


\bibitem[Zhao et~al\mbox{.}(2018)]%
        {8639931}
\bibfield{author}{\bibinfo{person}{Tuowen Zhao}, \bibinfo{person}{Samuel Williams}, \bibinfo{person}{Mary Hall}, {and} \bibinfo{person}{Hans Johansen}.} \bibinfo{year}{2018}\natexlab{}.
\newblock \showarticletitle{Delivering Performance-Portable Stencil Computations on CPUs and GPUs Using Bricks}. In \bibinfo{booktitle}{\emph{2018 IEEE/ACM International Workshop on Performance, Portability and Productivity in HPC (P3HPC)}}. \bibinfo{pages}{59--70}.
\newblock
\href{https://doi.org/10.1109/P3HPC.2018.00009}{doi:\nolinkurl{10.1109/P3HPC.2018.00009}}


\bibitem[Zheng et~al\mbox{.}(2022)]%
        {zheng2022amos}
\bibfield{author}{\bibinfo{person}{Size Zheng}, \bibinfo{person}{Renze Chen}, \bibinfo{person}{Anjiang Wei}, \bibinfo{person}{Yicheng Jin}, \bibinfo{person}{Qin Han}, \bibinfo{person}{Liqiang Lu}, \bibinfo{person}{Bingyang Wu}, \bibinfo{person}{Xiuhong Li}, \bibinfo{person}{Shengen Yan}, {and} \bibinfo{person}{Yun Liang}.} \bibinfo{year}{2022}\natexlab{}.
\newblock \showarticletitle{AMOS: enabling automatic mapping for tensor computations on spatial accelerators with hardware abstraction}. In \bibinfo{booktitle}{\emph{Proceedings of the 49th Annual International Symposium on Computer Architecture}}. \bibinfo{pages}{874--887}.
\newblock


\end{thebibliography}

\end{document}